\documentclass[pdflatex,sn-mathphys-num]{sn-jnl}

\usepackage{graphicx}%
\usepackage{multirow}%
\usepackage{amsmath,amssymb,amsfonts}%
\usepackage{amsthm}%
\usepackage{mathrsfs}%
\usepackage[title]{appendix}%
\usepackage{xcolor}%
\usepackage{textcomp}%
\usepackage{manyfoot}%
\usepackage{booktabs}%
\usepackage{algorithm}%
\usepackage{algorithmicx}%
\usepackage{algpseudocode}%
\usepackage{listings}%

\theoremstyle{thmstyleone}%
\newtheorem{theorem}{Theorem}
\newtheorem{proposition}[theorem]{Proposition}%

\theoremstyle{thmstyletwo}%

\theoremstyle{thmstylethree}%
\newtheorem{definition}{Definition}%

\usepackage{algorithm}

\usepackage{amsmath}
\usepackage{amssymb}
\usepackage{mathtools}
\usepackage{amsthm}
\usepackage{multirow}
 \usepackage{booktabs}
 
\newcommand{\R}{\mathbb{R}}

\newcommand{\K}{\mathcal{K}}

\usepackage{algorithm}
\usepackage{algpseudocode}

\algrenewcommand\algorithmicrequire{\textbf{Input:}}
\algrenewcommand\algorithmicensure{\textbf{Output:}}

\newcommand{\trls}{\textsc{TRLS}}

\usepackage{booktabs}
\usepackage{tcolorbox}

\begin{document}

\title[Topologically Relevant Landmark Selection on Point Clouds]{Topologically Relevant Landmark Selection on Point Clouds}


\author[1,2]{\fnm{Kaifeng} \sur{Zhang}}\email{zhangkf2022@lamda.nju.edu.cn}

\author[1,2]{\fnm{Kai Ming} \sur{Ting}}\email{tingkm@nju.edu.cn}


\affil[1]{\orgdiv{National Key Laboratory for Novel Software Technology}, \orgname{Nanjing University}, \orgaddress{\street{Xianlin Avenue}, \city{Nanjing}, \postcode{210023}, \state{Jiangsu}, \country{China}}}

\affil[2]{\orgdiv{School of Artificial Intelligence}, \orgname{Nanjing University}, \orgaddress{\street{Xianlin Avenue}, \city{Nanjing}, \postcode{210023}, \state{Jiangsu}, \country{China}}}





\abstract{Persistent Homology (PH) is an important tool in Topological Data Analysis for point clouds, but can be prohibitively expensive for large datasets. A practical approximation is to compute PH on a smaller subset of representative points, known as landmarks. Existing landmark selection methods mainly address challenges such as outlier contamination rather than restoring the topological features of the full dataset. To select topologically relevant landmarks, we propose a landmark selection method based on Hodge Laplacian. It assigns each point a topological relevance score based on its harmonic participation and selects points with high scores as landmarks. By quantifying each point’s participation in global homological structures, our method restores the PH of the original point cloud more accurately than geometric baselines, a graph coarsening approach and a method built upon local PH on both synthetic and real-world datasets.}

\keywords{Topological Data Analysis, Persistence Diagram, Landmark Selection}



\maketitle

\section{Introduction}


Persistent Homology (PH) \citep{wasserman2016topological,edelsbrunner2000topological} provides a principled framework for characterizing the multi-scale topological features of point cloud data. Given a point cloud, PH constructs a nested sequence of simplicial complexes \citep{salnikov2018simplicial}, known as a 
\textit{filtration}, and tracks how topological features appear and disappear along this sequence. The resulting summary is encoded by a Persistence Diagram (PD), denoted as a finite multiset $\mathcal{D}_q=\{r=(b,d) \mid d\geq b\}$, where each point \(r\) represents a topological feature, such as a connected component, circle, or void. 


 PDs can be abstract descriptors of point cloud topology that carry domain-specific information. For instance, PDs computed from atomic point clouds have been used to analyze local and intermediate scale structural order in amorphous materials \citep{hiraoka2016hierarchical}. In addition, PDs have been shown to provide useful structural descriptors in a wide range of scientific domains, such as biology \citep{xia2015multidimensional,kovacev2016using,xia2018multiscale,meng2020weighted,liu2022dowker} and chemistry \citep{doi:10.1073/pnas.1520877113,lee2017quantifying,townsend2020representation}. 

Despite the effectiveness of PDs in characterizing point cloud topology, their computation can become prohibitively expensive for large point clouds. As pointed by \citet{zomorodian2004computing}, the cost of PH grows rapidly with the number of simplices, making direct
computation prohibitive for large point clouds.  This difficulty motivates the use of subsampling strategies that compute PH on a smaller set of representative points, known as landmarks \citep{de2004topological,stolz2023outlier}.


Existing landmark selection methods include random sampling, farthest point sampling, and strategies based on local PH \citep{stolz2023outlier} and graph coarsening \citep{topograph}. Random sampling selects landmarks uniformly from the input point cloud, making it simple and efficient. Farthest Point Sampling (FPS) \citep{eldar1997farthest,ebli2019notion,chen2021decomposition,climaco2026densityaware} improves spatial coverage by iteratively selecting points that are far from the current landmark set, and hence tends to produce a more evenly distributed subset. However, this distance-based criterion can be sensitive to outliers \citep{de2004topological}. \citet{stolz2023outlier} proposed a landmark selection method that uses local PH to reduce the influence of outliers. And \citet{topograph} proposed TopoGraph, a fast method which coarsens the graph built on the point cloud. Despite the development of these methods, a central question remains unaddressed:

\begin{tcolorbox}[colback=gray!8, colframe=gray!40]
\emph{Can PH computed on selected landmarks faithfully restore the underlying clean topological features of a contaminated point cloud?}
\end{tcolorbox}
This question is critical because good coverage and few outliers do not ensure
reliable downstream analysis unless the landmarks recover the underlying clean topological features.

Existing methods focus on different objectives. Random sampling targets
unbiased subsampling and FPS targets geometric coverage. The PH landmarks of \citet{stolz2023outlier} use persistent homology in fixed-radius neighborhoods to select landmarks robustly in the presence of outliers, aiming to retain the coarse topology of the underlying signal. Their experiments primarily assess the fraction of selected landmarks that belong to the signal. This fraction, however, does not determine whether the selected set preserves the signal's global PD: even landmarks drawn entirely from the signal can miss a loop or void, or alter its persistence, if they provide insufficient coverage. Likewise, a pointwise score based on local PH does not directly relate to the topology (PD) of the selected landmarks. TopoGraph \cite{topograph} does spectral graph coarsening \cite{Deng2020GraphZoom,10.24963/ijcai.2024/891} on the graph built on outlier-filtered point cloud. The nodes of the coarsened graph are the then used as landmarks. \emph{None of
these methods considers landmark selection from the perspective of global topology. Thus,
they do not answer the question above.}

To address this problem, we propose a landmark selection method based on global Hodge Laplacian \citep{lim2020hodge,schaub2020random}, which is shown by \citet{tpcc,grande2025pointlevel} to be able to effectively encode the global topological features in PH. Our method assigns each point a topological relevance score derived from a Hodge Laplacian built on point cloud, and selects the points with high scores as landmarks. These landmarks are therefore encouraged to retain points that are important for the global topological structure of the underlying point cloud. Our contributions are fourfold:
\begin{enumerate}

\item Establish topology recovery as the central objective of landmark selection and formalize this objective as the \emph{Topology Recovery Task}: selecting
landmarks whose PH recovers the clean topological features underlying an
outlier-contaminated point cloud.
    \item Propose Topologically Relevant Landmark Selection (TRLS) method, which combines density-guided complex construction with Hodge harmonic participation for Topology Recovery Task.
    \item Conduct experiments on both synthetic and real-world datasets, showing that our method often outperforms random sampling, geometric baselines (FPS and its density-aware variant \cite{climaco2026densityaware}), graph coarsening approach TopoGraph \cite{topograph} and a competitor based on local PH \cite{stolz2023outlier} in topology recovery task. 
\end{enumerate}

\section{Background on Persistent Homology}
 Persistent Homology (PH) \citep{wasserman2016topological,edelsbrunner2000topological} studies the evolution of topology across scales by constructing a filtration of simplicial complexes. A common construction is the Vietoris Rips filtration \citep{vr1995}, where simplices are added according to pairwise distances between points. 

\begin{figure}[h]
    \centering
\includegraphics[width=.99\linewidth]{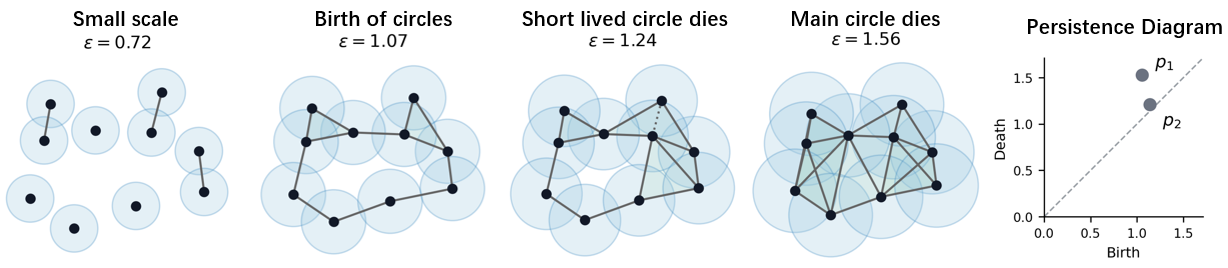}
    \caption{
    Illustration of PH on a point cloud. As the filtration scale \(\epsilon\) increases, rings are born and later die when they are filled by higher dimensional simplices. Persistence Diagram records each circle as a point \(p=(b,d)\). \(p_1=(1.06,1.53)\) denotes the main persistent circle, and \(p_2=(1.14,1.21)\) denotes the short lived one.
    }
    \label{fig: ph background}
\end{figure}



At scale $\epsilon$, $\mathcal{K}(\epsilon)$ is a simplicial complex consisting of vertices, edges, triangles, and higher-dimensional simplices. Each simplex $\sigma\in\mathcal{K}(\epsilon)$ has dimension $\dim(\sigma)=|\sigma|-1$. The $q$-th Betti number is $\beta_q=\operatorname{rank}H_q(\mathcal{K}(\epsilon);\mathbb{F})$, where $H_q$ is the $q$-th homology group over a field $\mathbb{F}$. It counts independent $q$-dimensional topological features. For example, the complex at $\epsilon=1.07$ in Figure~\ref{fig: ph background} contains one circle, giving $\beta_1=1$.

As the filtration scale $\epsilon$ increases, edges between points having distance less than $\epsilon$ are added. And topological features such as connected components, circles appear and then disappear. The homology dimension is
denoted by \(q\): \(q=0\) for connected components, \(q=1\) for loops
or circles, and \(q=2\) for voids. Each feature is represented by its birth and death times in a Persistence Diagram (PD), and its persistence, defined as the difference between death and birth, measures the prominence of the feature. Features with large persistence are often interpreted as meaningful topological features of the underlying space, while short lived features are typically treated as noise. An example of filtration and PD ($q=1$) is given in Figure \ref{fig: ph background}. 


\section{Topology Recovery Task}

Let $X=\{x_i\}_{i=1}^{n}\subset \R^d$ be a point cloud with outliers. In our work, we have $X=X_{\mathcal{M}}\cup X_{\mathrm{o}}$, where $X_{\mathcal{M}}$ are the points sampled from underlying manifold $\mathcal{M}$ and $X_{\mathrm{o}}$ are the outlier points.
We start with the definition of Birth Multiset. 

\begin{definition}[Birth Multiset]
    The birth multiset of a PD $\mathcal{D}_q (X)$ is defined as $\mathcal{B}_q (X)=\{b\mid (b,d)\in \mathcal{D}_q(X), d-b\geq\epsilon\}$, i.e., the multiset containing all the prominent birth scales \footnote{Birth scales of the topological features with extremely small persistence, i.e., noises, are not considered here.} in $\mathcal{D}_q (X)$.
\end{definition}
We then define topological relevance (Definition \ref{def:topological-relevance}) through birth scales because they capture when topological features first become supported by the point cloud. This directly reflects the points responsible for forming circles or voids. Death scales, however, depend on how features are later filled or destroyed and can be influenced by nearby trivial points. Therefore, the birth multiset better matches our goal of identifying points that support topological feature formation, while full PDs remain useful for evaluation in Topological Recovery Task (Definition \ref{definition:TRP}).
\begin{definition}[Topological Relevance]
    Given points $X_\mathcal{M}$ sampled from manifold $\mathcal{M}$ and non-empty homology dimension range $Q=\{q_1,q_2,...\}$, for $q\in Q$, topologically irrelevant points $X_\mathrm{ir}^q$ are the maximal subset $\hat{X}\subset X_\mathcal{M}$ such that $\mathcal{B}_{q}(X_\mathcal{M})=\mathcal{B}_q (X_{\mathcal{M}}\setminus\hat{X})$. And the corresponding topologically relevant points are $X_\mathrm{r}^q=X_\mathcal{M} \setminus X_\mathrm{ir}^q$, i.e., $\mathcal{B}_q(X_\mathcal{M})=\mathcal{B}_q(X_{\mathrm{r}}^q)$. Combining all homology dimensions, we have the (aggregated) topologically relevant points $X_\mathrm{r}=\bigcup_{q\in Q}X_\mathrm{r}^q$ and topologically irrelevant points  $X_\mathrm{ir}=\bigcap_{q\in Q} X_{\mathrm{ir}}^{q}$.
    \label{def:topological-relevance}
\end{definition}
  When $q=0$, the removal of any point $x\in X_\mathcal{M}$ will change the cardinality of $\mathcal{B}_q(X_\mathcal{M})$. Then we have $X_\mathcal{M}=X_\mathrm{r}$. So we only consider $\min_{q\in Q}q\geq 1$ in this work. In practice, each $x\in X_\mathcal{M}$ has perturbation caused by (Gaussian) noise.
\begin{proposition}
Under Definition~\ref{def:topological-relevance}, the topologically relevant and irrelevant points form a decomposition of \(X_{\mathcal M}\), i.e., $X_{\mathcal M}=X_{\mathrm{r}}\cup X_{\mathrm{ir}}, X_{\mathrm{r}}\cap X_{\mathrm{ir}}=\varnothing.$
\label{proposition:topological-relevance}
\end{proposition}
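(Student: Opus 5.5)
The claim is set-theoretic and follows from De Morgan's laws together with the per-dimension complementarity built into Definition~\ref{def:topological-relevance}. The plan is to first establish the per-dimension decomposition and then pass to the aggregated sets.

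\emph{Step 1 (per dimension).} Fix $q\in Q$. By definition $X_{\mathrm r}^q = X_{\mathcal M}\setminus X_{\mathrm{ir}}^q$ and $X_{\mathrm{ir}}^q\subseteq X_{\mathcal M}$. Hence
\[
X_{\mathcal M}=X_{\mathrm r}^q\cup X_{\mathrm{ir}}^q,
\qquad
X_{\mathrm r}^q\cap X_{\mathrm{ir}}^q=\varnothing .
\]

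\emph{Step 2 (aggregation).} Using Step 1 and De Morgan's law inside the ambient set $X_{\mathcal M}$,
\[
X_{\mathcal M}\setminus X_{\mathrm{ir}}
= X_{\mathcal M}\setminus \bigcap_{q\in Q} X_{\mathrm{ir}}^q
= \bigcup_{q\in Q}\bigl(X_{\mathcal M}\setminus X_{\mathrm{ir}}^q\bigr)
= \bigcup_{q\in Q} X_{\mathrm r}^q
= X_{\mathrm r}.
\]
Since $X_{\mathrm{ir}}\subseteq X_{\mathcal M}$, this identity is equivalent to the two claims.
\begin{itemize}
\item \emph{Covering.} We have $X_{\mathrm r}\cup X_{\mathrm{ir}}=(X_{\mathcal M}\setminus X_{\mathrm{ir}})\cup X_{\mathrm{ir}}=X_{\mathcal M}$.
\item \emph{Disjointness.} Directly: if $x\in X_{\mathrm r}\cap X_{\mathrm{ir}}$, then $x\in X_{\mathrm r}^{q}$ for some $q$ and $x\in X_{\mathrm{ir}}^{q}$ for that same $q$, which contradicts Step 1.
\end{itemize}
Note that $Q$ is nonempty, so the intersection defining $X_{\mathrm{ir}}$ is taken over a nonempty family and is a genuine subset of $X_{\mathcal M}$.

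\emph{Main obstacle: well-definedness of $X_{\mathrm{ir}}^q$.} The definition calls $X_{\mathrm{ir}}^q$ "the maximal subset", and maximal subsets satisfying $\mathcal B_q(X_{\mathcal M})=\mathcal B_q(X_{\mathcal M}\setminus\hat X)$ need not be unique, since the family of such subsets is not obviously closed under union. I would argue that the proposition does not depend on uniqueness. Fix any choice of maximal $X_{\mathrm{ir}}^q$ for each $q$, and define $X_{\mathrm r}^q$ as its complement in $X_{\mathcal M}$. Steps 1 and 2 then use only that $X_{\mathrm{ir}}^q\subseteq X_{\mathcal M}$ and that $X_{\mathrm r}^q$ is its complement, so the decomposition holds for every admissible choice. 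Existence of at least one maximal subset is also needed. The empty set is admissible, and $X_{\mathcal M}$ is finite, so a maximal element exists. I would state this assumption explicitly in the proof.
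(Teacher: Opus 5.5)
Your proof is correct and matches the paper's argument: the paper also derives $X_{\mathrm r}=X_{\mathcal M}\setminus X_{\mathrm{ir}}$ via De Morgan's law (checked elementwise) and then reads off covering and disjointness. Your additional points are sound and are left implicit in the paper: $X_{\mathrm{ir}}\subseteq X_{\mathcal M}$ requires $Q\neq\varnothing$, and the decomposition holds for any choice of maximal $X_{\mathrm{ir}}^q$, whose existence follows from finiteness of $X_{\mathcal M}$.
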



Proposition \ref{proposition:topological-relevance} shows the completeness of the relevance definition. Its proof is given in Appendix \ref{appendix: proof}. An example is given in Figure \ref{fig: demo sphere in circle}. In the \textsc{sphere-in-circle} dataset, when $Q=\{1\}$ ($\{2\}$), $X_\mathrm{r}$ are the points in the circle (sphere). When $Q=\{1,2\}$, $X_\mathrm{r}$ are the points in the circle and sphere. $X_\mathrm{ir}$ are the points on the line connector. The removal of the line connector does not change the 1-dimensional and 2-dimensional PD of the \textsc{sphere-in-circle} dataset. For more synthetic datasets in Figure \ref{fig:three_dataset_comparison}, relevant regions are known by construction: circles, spheres, and torus support the topological features, whereas line connectors or solid spheres do not affect their birth times.

Once we have the notion of topological relevance, we can define Topology Recovery Task (TRT).
\begin{definition}[Topology Recovery Task]
    Given a point cloud
$X=\{x_i\}_{i=1}^n\subset\mathbb R^d$, a sample size $m$, and target
homology dimensions $Q$, Topology Recovery Task asks for selecting a landmark
set $S\subset X$, $|S|=m$, such that the PDs computed on $S$ can serve as
a faithful surrogate for the PDs of the unknown topologically relevant signal
$X_r$.
\label{definition:TRP}
\end{definition}
The selected landmark set is evaluated primarily by the PD error
$\sum_{q\in Q} W_2(D_q(S),D_q(X_r))$, complemented by the signal
fraction $|S\cap X_r|/|S|$, where $W_2$ is the finite
$2$-Wasserstein distance with $L_\infty$ ground metric, a standard
metric for PDs \citep{mileyko2011probability}. $\mathcal{D}_q(X_\mathrm{r})$ and $\mathcal{D}_{q}(S)$ are referred to as signal PD and selected PD, respectively. The consistency between the signal and selected PDs in TRT is crucial when using landmarks as a computational surrogate for the original point cloud in downstream topological analysis. 

A smaller PD error directly indicates more faithful recovery of the
desired clean topology. Signal fraction $|S\cap X_r|/|S|$ is
reported as a diagnostic of contamination and topological irrelevance rejection \cite{stolz2023outlier}. A higher
signal fraction is desired as it means more topologically relevant points are selected. Note that signal fraction alone does not guarantee topology recovery, since the selected
signal points may still fail to cover the relevant topological structures.

Given that $X_r$ is unknown during selection, our goal is to infer a score from
$X$ that assigns high values to points in $X_r$ and low values to irrelevant
points and outliers. We then select points with high scores as landmarks, while $X_r$ is used
only for evaluation.

\section{Related Work}

Subsampling has been used to reduce the computational burden of PH on large point clouds \citep{de2004topological,stolz2023outlier}. Random sampling selects landmarks uniformly from the input point cloud and is computationally efficient, but it may contain topologically irrelevant points and outliers and have limited coverage . 

Farthest Point Sampling (FPS) \citep{de2004topological} selects landmarks sequentially by choosing the point that is farthest from the current landmark set, which usually provides better spatial coverage of the point cloud. Nevertheless, this criterion ignores topological relevance due to its geometric nature and can be sensitive to outliers. As shown in the second row in Figure \ref{fig: demo sphere in circle}, FPS selects many outliers and topologically irrelevant points, i.e., the line connector. Density-Aware FPS (DA-FPS) \citep{climaco2026densityaware} improves FPS by incorporating local density information into the sampling criterion, which can reduce the selection of outliers. However, it remains a primarily geometric strategy, and thus may still favor dense but topologically irrelevant regions.

Several other methods have studied landmark selection (or subsampling) for PH \citep{PDstable2,niyogi2008finding,dufresne2019sampling}. But they generally assume relatively clean point clouds and do not explicitly address the presence of outliers.

In order to improve robustness to outliers, \citet{stolz2023outlier} proposed the first landmark selection method that considers PH. \citet{stolz2023outlier} uses local PH via complex on points in a small local neighborhood \citep{stolz2020geometric,von2023topological} to assign a score to each point, which can be interpreted as measuring how much the local topology around a point deviates from that of a regular point on the underlying manifold. This method is named Persistent Homology Landmarks (PHL). It has two variants: PHL-\MakeUppercase{\romannumeral 1} considers all homology dimensions; PHL-\MakeUppercase{\romannumeral 2} considers homology dimension $q=1$ only. 

PHL is useful for identifying outliers or singular regions. But it considers local topology only and does not directly quantify the contribution of each point to the global topological structure of the point cloud. So the selected landmarks may contain topologically irrelevant points, such as the points from the line connector, as shown in Figure \ref{fig: demo sphere in circle}. 

Topograph \cite{topograph} conducts landmark selection from the view of graph coarsening. It first filters the outliers of the point cloud using DTM (Distance to Measure) \cite{chazal2017robust} values and then builds adaptive k-nearest neighbors graph \cite{cai2022new} on the filtered point cloud. Then it does  spectral graph coarsening \cite{Deng2020GraphZoom,10.24963/ijcai.2024/891}. The nodes of the coarsened graph are the then used as landmarks. TopoGraph focuses on the speed up gained via graph coarsening. It does not consider the global topology of the point cloud in its core methodology.

In summary, existing methods mainly focus on sampling efficiency, spatial coverage, local topological abnormality detection and robustness to outliers. They do not consider global topology of the point cloud. So the landmarks may contain many outliers and topologically irrelevant points. The PD of the landmarks could be very different from that of the topologically relevant points, which goes against the goal of Topological Recovery Task.


\section{Preliminaries on Hodge Laplacian}
To make global topology useful for point-level analysis, we need to connect them to local structures in the point cloud. We start by introducing boundary matrix. Let $\mathcal{K}$ be an oriented simplicial complex \citep{mishra2025foundations} built on a point cloud. We denote  by \(C_q(\mathcal{K})\) the vector space
spanned by $\mathcal{K}_q$, the set of $q$-simplices $\sigma\in \K$ with $|\sigma|=q+1$. The boundary operator
$\partial_q:C_{q+1}(\mathcal{K})\rightarrow C_{q}(\mathcal{K})$ maps each $(q+1)$-simplex to the
alternating sum of its $q$-dimensional faces. After fixing an ordering of
simplices, \(\partial_q\) can be represented by an incidence matrix
\(B_q\in\mathbb{R}^{|\mathcal{K}_{q}|\times |\mathcal{K}_{q+1}|}\), whose entries encode whether a
$q$-dimensional simplex is contained in a $(q+1)$-dimensional simplex. 
Hodge Laplacian is then defined based on boundary matrix as following.

\begin{definition}[\citealp{eckmann1944harmonische,lim2020hodge}]
    The q-th Hodge Laplacian $L_q$ of $\mathcal{K}$ is a square matrix $L_q\in\mathbb{R}^{|\mathcal{K}_{q}|\times |\mathcal{K}_{q}|}$ indexed by the q-simplices of $\mathcal{K}_q$: $        L_q = B_{q-1}^\top B_{q-1} + B_{q}B_{q}^\top$.
\end{definition}

The following key property links the topology of $\mathcal{K}$ with its Hodge Laplacian and enables corresponding simplex-level analysis via eigenvectors. The kernel space $\mathrm{ker}(L_k)$ is also named harmonic space \citep{lim2020hodge,schaub2021signal,roddenberry2021principled}.


\begin{theorem}[\citealp{eckmann1944harmonische,friedman1996computing}]
For a simplicial complex $\mathcal{K}$, let $L_q$ be the Hodge Laplacians
and $\beta_q$ be the Betti numbers of $\mathcal{K}$. We have that
$\mathrm{rank}(\mathrm{ker}(L_q)) = \beta_q$.
\end{theorem}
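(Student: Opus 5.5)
The argument is the standard Hodge decomposition of the chain space. We work over $\mathbb{R}$ and use the standard inner product on $C_q(\mathcal{K})$, in which the elementary simplices form an orthonormal basis. With this inner product the matrix $B_q^\top$ represents the adjoint of $\partial_q$, and we identify $H_q$ with $\ker B_{q-1}/\operatorname{im} B_q$. The key algebraic fact is $B_{q-1}B_q=0$, which is the matrix form of $\partial\circ\partial=0$. Since $\beta_q$ is defined over a field $\mathbb{F}$, we take $\mathbb{F}=\mathbb{R}$ (or $\mathbb{Q}$, which gives the same rank by universal coefficients, since these fields have characteristic zero).

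\textbf{Step 1: the kernel of $L_q$.} We show $\ker L_q=\ker B_{q-1}\cap\ker B_q^\top$. For $x\in C_q$,
\[
x^\top L_q x=\|B_{q-1}x\|^2+\|B_q^\top x\|^2 .
\]
Each term is a sum of positive semidefinite quadratic forms, so $L_qx=0$ forces both terms to vanish. Conversely, if $B_{q-1}x=0$ and $B_q^\top x=0$, then $L_qx=0$ directly from the definition.

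\textbf{Step 2: orthogonal decomposition.} Since $\operatorname{im}B_q^\top$ is the orthogonal complement of $\ker B_q$ (and similarly for the other maps), we get the orthogonal decomposition
\[
C_q=\operatorname{im}B_{q-1}^\top\oplus\operatorname{im}B_q\oplus(\ker B_{q-1}\cap\ker B_q^\top).
\]
To verify it, first note that $\operatorname{im}B_q$ and $\operatorname{im}B_{q-1}^\top$ are mutually orthogonal: for any $y,z$,
\[
\langle B_q y, B_{q-1}^\top z\rangle=\langle B_{q-1}B_q y, z\rangle=0 .
\]
Next, the orthogonal complement of $\operatorname{im}B_{q-1}^\top+\operatorname{im}B_q$ is $\ker B_{q-1}\cap\ker B_q^\top$. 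This holds because $x$ is orthogonal to $\operatorname{im}B_{q-1}^\top$ exactly when $B_{q-1}x=0$, and orthogonal to $\operatorname{im}B_q$ exactly when $B_q^\top x=0$.

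\textbf{Step 3: counting dimensions.} Since $\operatorname{im}B_q\subseteq\ker B_{q-1}$, intersecting the decomposition with $\ker B_{q-1}$ kills the $\operatorname{im}B_{q-1}^\top$ summand, which is orthogonal to $\ker B_{q-1}$. This leaves
\[
\ker B_{q-1}=\operatorname{im}B_q\oplus\ker L_q .
\]
Hence $\dim\ker L_q=\dim\ker B_{q-1}-\operatorname{rank}B_q=\dim H_q=\beta_q$. In other words, the harmonic space maps isomorphically onto $H_q$ under the quotient map.

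\textbf{Edge cases and main obstacle.} For $q=0$ we set $B_{-1}=0$. For $q$ equal to the top dimension of $\mathcal{K}$ we set $B_q=0$ because there are no $(q+1)$-simplices. With these conventions the argument goes through unchanged. The only genuinely delicate step is Step 2, namely checking that the intersection of the decomposition with $\ker B_{q-1}$ splits as claimed. Everything there reduces to the identity $B_{q-1}B_q=0$ together with the fact that the image of a matrix is the orthogonal complement of the kernel of its transpose.
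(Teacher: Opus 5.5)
Your proof is correct and is the standard finite-dimensional Hodge decomposition argument, $C_q=\operatorname{im}B_{q-1}^\top\oplus\operatorname{im}B_q\oplus\ker L_q$. The paper gives no proof of its own and only cites Eckmann and Friedman, whose argument is essentially this one. Your restriction to $\mathbb{F}=\mathbb{R}$ (or any characteristic-zero field) is in fact \emph{necessary} rather than a convenience, because the paper's phrase ``over a field $\mathbb{F}$'' is too generous: for a triangulated $\mathbb{RP}^2$ one has $\dim\ker L_1=0$, while $\beta_1=1$ over $\mathbb{F}_2$.
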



 \noindent\textbf{Relation with Methods based on Hodge Laplacian.} Like Topological Point Cloud Clustering (TPCC) \citep{tpcc} and Topological Point Features (TOPF) \citep{grande2025pointlevel}, \trls{} uses Hodge Laplacians to relate harmonic structure on simplices to the topology of individual points. Their objectives, however, differ from ours: both of TPCC and TOPF try to cluster points by their membership to the different topological features. TPCC uses the full input cloud's Persistence Diagram to choose the complex scale \citep{tpcc}, builds simplicial complex and use Hodge Laplacian and subspace clustering to learn a representation that is suitable for clustering. TOPF constructs point-level representations after computing Persistent Diagram on the full point cloud and selecting significant generators \citep{grande2025pointlevel}. \trls{} instead selects a fixed-budget landmark set to preserve the clean signal's Persistence Diagram, \textbf{without computing a Persistence Diagram on the full observed point cloud during selection}. Adapting TPCC or TOPF outputs into landmark rankings would change their original tasks while retaining their reported reliance on Persistence Diagram information of full point cloud, so we discuss them as related methods rather than direct landmark selection baselines.

 \section{Landmark Selection via Hodge Laplacian}

For Topology Recovery Task, we propose Topologically Relevant Landmark Selection (TRLS) method by employing Robust Kernel Density Estimation (RKDE) \citep{nips2020robust} for simplicial complex construction and the  harmonic energy of Hodge Laplacian to identify topologically relevant points. The intuition behind \trls{} is that the near-zero eigenspace of Hodge
Laplacian encodes global homological features. Its energy in the subspace on each simplex
measures participation in these features, so selecting points with high energy
preserves the topological structures.

\begin{algorithm}[h]
\caption{Topologically Relevant Landmark Selection}
\label{alg:trls-selection}
\begin{algorithmic}[1]
\Require Point cloud \(X\), sampling density \(\rho\), and target homology dimensions \(Q\)
\Ensure Landmark set \(S\)
\State Fit RKDE to $X$ and compute reliability weight $r_i$
\State Estimate scale parameter $\epsilon$ by the reliability-weighted quantile in Eq.~\eqref{eq:trls-alpha-scale}
\State Construct complex \(\K(X)\) up to dimension \(\max(Q)+1\)
\For{each \(q\in Q\)}
    \State Build Hodge Laplacian $L_q$
    \State Compute simplex energy \(e_q\) by Eq.~\eqref{eq:trls-simplex-energy} using $L_q$
    \State Lift to point-level Hodge-derived score $\hat{v}_q$ by Eq.~\eqref{eq:trls-support-score}
\EndFor
\State Fuse homology dimensions by $    v(i)
    =
    \max_{q\in Q}\hat{v}_q(i)$
\State Set \(m=\lceil\rho n\rceil\)
\State Sort points by Hodge-derived score $v(i)$
\State Select the
$m$ highest-scoring points, randomly breaking exact ties, as landmark set $S$
\State \Return \(S\)
\end{algorithmic}
\end{algorithm}

Given a point cloud \(X=\{x_i\}_{i=1}^{n}\subset\mathbb{R}^d\), a sampling density \(\rho\in(0,1)\), and a set of target homology dimensions \(Q\), \trls{} (Algorithm \ref{alg:trls-selection}) assigns each point a Hodge-derived score and selects \(m=\lceil\rho n\rceil\) landmarks with highest scores. \emph{This procedure does not require PH computation.} 
 \begin{figure*}[t]
    \centering
\includegraphics[width=.99\textwidth]{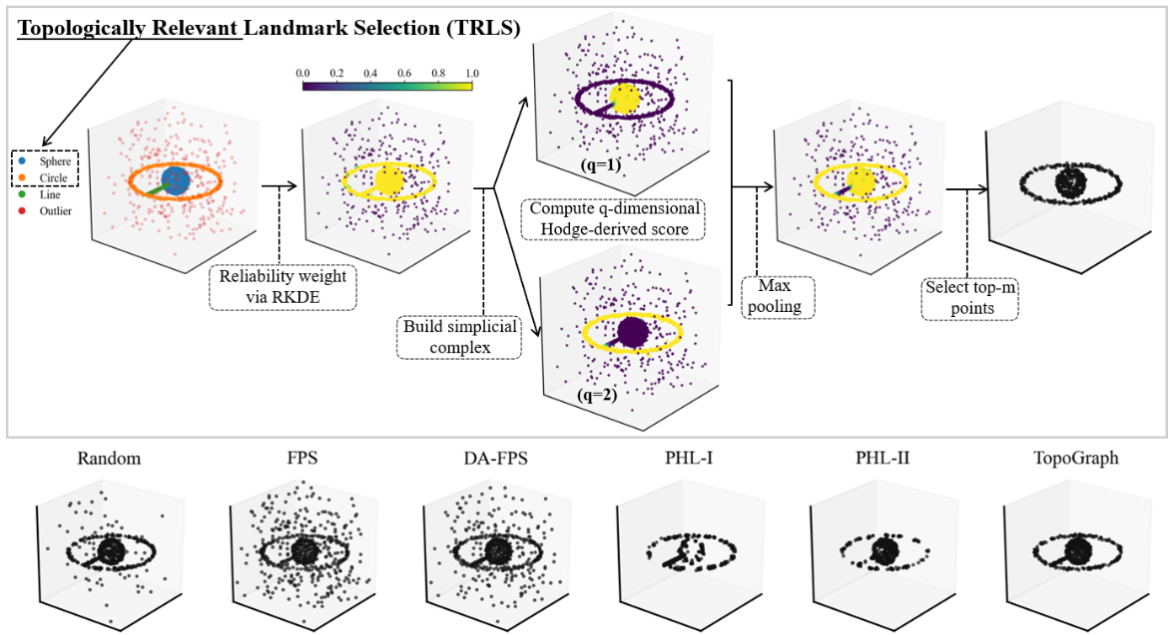}
    \caption{Top row shows the pipeline of TRLS on \textsc{sphere-in-circle} dataset when $Q=\{1,2\}$. This dataset contains points sampled from a circle linked with sphere through a line connector, which are then perturbed with Gaussian noise and contaminated by uniform outliers.  Bottom row shows landmarks selected by competing baselines at sampling density $\rho=0.2$.}
    \label{fig: demo sphere in circle}
\end{figure*}
\subsection{Step 1: Build Simplicial Complex under a Scale Parameter Guided by RKDE}


We first fit RKDE \citep{nips2020robust} on $X$ to reduce the influence of
outliers when constructing the simplicial complex. RKDE robustly estimates the
dominant kernel mean and assigns a reliability weight to each point: points with
large RKHS residuals receive smaller weights. These weights measure each point's
influence on the robust estimate, rather than its local density or a binary
outlier label.

\subsubsection{Obtain Reliability Weights via RKDE}
Let $k_h$ be a Gaussian kernel with bandwidth\footnote{We use a default RKDE bandwidth in this work: the median \(t\) nearest neighbor distances with $t=\operatorname{clip}(\lfloor\log n\rfloor,3,10)$.} $h$. RKDE estimates
$\mu_w=\sum_{j=1}^n w_jk_h(x_j,\cdot)$ by minimizing
$\frac{1}{n}\sum_i\ell(\|k_h(x_i,\cdot)-\mu_w\|_{\mathcal H})$ over
$\Delta_n=\{w\geq0:\sum_iw_i=1\}$, where $\ell$ is the Hampel loss \citep{kim2012robust}. We normalize
$w_i$ to obtain the reliability $r_i\in[0,1]$.


The reliability weight $r_i$ is then obtained by scale $w_i$ to $[0,1]$ through min-max normalization. Note that $r_i$ is not a PH-based quantity. It is a robust inlierness weight used only to choose the scale for building simplicial complex. An example of reliability is shown in the first row of Figure \ref{fig: demo sphere in circle}.

\subsubsection{Choose Scale and Build Simplicial Complex}
Intuitively, the scale parameter \(\epsilon\) \footnote{Our method builds complex at one scale. We focus on point clouds whose relevant topological features appear over a similar range of scales and have comparable local sampling densities. In this setting, the scale chosen using RKDE provides a common complex for Hodge scoring. When target features differ substantially in scale or density, one global scale may fail to represent them together. Constructions that adapt to local density, such as CkNN \citep{2639-8001_2019_1_1} and the filtration based on the $\Lambda$-kernel \citep{pmlrlamdafilter}, may be more suitable. Such heterogeneous regimes are beyond the scope of this work.} is chosen as a robust estimate of the typical local spacing among reliable points, so edges and higher-dimensional simplices are formed at a scale supported by the main point cloud rather than by isolated outliers.

Let \(\delta_i\) be the distance from \(x_i\) to its \(k\)-th nearest neighbor. The scale parameter \footnote{We assume that for each target dimension \(q\in Q\), the RKDE-selected scale \(\epsilon\)
lies inside the alive interval of the target clean features formed by $X_\mathrm{r}$. We validate this in Appendix \ref{appendix: scale validation}.} is chosen by
\begin{equation}
    \epsilon
    = \mathcal{Q}_{\tau}^{(r)}
    \left(\{\delta_i\}_{i=1}^{n}\right),
    \label{eq:trls-alpha-scale}
\end{equation}
where \(\mathcal{Q}_{\tau}^{(r)}\) is the reliability-weighted empirical quantile. 

Here, ``weighted'' means that each local distance \(\delta_i\) contributes according to its reliability \(r_i\), so distances from reliable points have larger influence on the quantile than distances from unreliable or outlying points. The \(\tau\)-quantile is the value below which approximately a \(\tau\) fraction of the total reliability weight falls. In all formal experiments, we use \(k=8\), and \(\tau=0.9\) by default. A sensitivity analysis is given in the Discussion Section.


With the selected scale, we construct the alpha complex \citep{edelsbrunner1983shape,edelsbrunner1994three} $\K$ by retaining simplices with alpha filtration value at most \(\epsilon^2\), up to dimension \(\max(Q)+1\).

Alpha complexes are efficient in 2D or 3D, but rely on Delaunay triangulation, whose size grows rapidly in high dimensions, and require point coordinates. For high dimensional or distance only data, we use Rips complexes instead, as demonstrated in the later experiments.

\subsection{Step 2: Compute Hodge-Derived Score}
\subsubsection{Obtain Simplex-Level Score}
Once we have simplicial complex $\mathcal{K}$, for each \(q\in Q\), we can build the \(q\)-Hodge Laplacian $L_q$. Then we compute the small eigenpairs $(\lambda,u)$ of \(L_q\) under eigenvalue threshold\footnote{$\eta$ is set to be $10^{-8}$ by default. The practical usability of this parameter is discussed in Appendix \ref{appendix: 5}.} $\eta$, where \(\lambda\leq \eta\) is an eigenvalue and
\(u_i\in\mathbb{R}^{|\K_q|}\) is the corresponding eigenvector indexed by
\(q\)-simplices. The simplex-level score is then defined as the harmonic energy\footnote{If no computed eigenvector satisfies the near-harmonic threshold $\eta$, the simplex scores for this dimension are set to zero.} that each \(q\)-simplex \(\sigma\in\K_q\) receives:
\begin{equation}
    e_q(\sigma)
    =
    \sum_{\lambda\le \eta}u(\sigma)^2,
    \label{eq:trls-simplex-energy}
\end{equation}
which is the squared length of the projection of
\(\sigma\) onto the harmonic subspace \(\ker (L_q)\). Thus, the simplex score is a basis-invariant harmonic participation
score. It quantifies the contribution of each simplex ($\sigma$) to the global
\(q\)-dimensional harmonic subspace. This score is then  min-max normalized before lifted to point-level.

\subsubsection{Lift from Simplex-Level to Point-Level}
After we have the simplex-level energy, we need to lift it to point level. Let \(T_q\) be the simplices with \(e_q(\sigma)\) above the \(\tau\)-quantile \footnote{Its sensitivity is analyzed in Appendix \ref{app:simplex-seed-sensitivity}.}, and let $    V_q=\bigcup_{\sigma\in T_q}\operatorname{Vert}(\sigma)$ be the corresponding seed vertices, where $\operatorname{Vert}(\sigma)$ means all the vertices in simplex $\sigma$. Let \(G=(X,\K_1)\) be the one-skeleton graph of the alpha complex, with Euclidean edge weights. For each point \(x_i\), define $   d_q(i)=\min_{v\in V_q}d_G(i,v),$ where \(d_G(i,j)\) is the shortest-path distance between vertices \(i\) and \(j\) on \(G\).
The raw Hodge-derived score under dimension $q$ is defined as
\begin{equation}
    \hat{v}_q(i)
    =
    \exp\left(
    -\left(\frac{d_q(i)}{2\epsilon}\right)^2\right),
    \label{eq:trls-support-score}
\end{equation}
with disconnected vertices assigned score zero. These raw scores are rescaled via min-max normalization to obtain the final Hodge-derived score $\hat{v}_q$ for homology dimension $q$.


\subsection{Step 3: Fusion and Selection}

For the target homology dimensions $Q$, we want to prioritize a point if it is relevant to any one of them. For example, a point on a loop may have a high $\hat{v}_1$ score but a low $\hat{v}_2$ score and averaging the two would lower its rank despite its relevance to the loop. We therefore use max fusion, $v(i)=\max_{q\in Q}\hat v_q(i)$, so that each point is ranked by its strongest response among the target dimensions. Given a sampling density $\rho$, we select the $m$ points with the highest scores and break exact ties at random.

\section{Experiments}
We compare TRLS with random sampling, standard geometric baselines (FPS, DA-FPS), Topograph and methods based on local PH (PHL-\MakeUppercase{\romannumeral 1} \& \MakeUppercase{\romannumeral 2}). PHL-\MakeUppercase{\romannumeral 1} \& \MakeUppercase{\romannumeral 2} and TopoGraph are deterministic methods. For the rest methods involving randomness, we report the mean quantitative metrics (signal fraction \& Wasserstein distance) and stds over five runs. Corresponding stability is analyzed in Appendix \ref{appendix: stability}.

In most datasets, for TRLS and its variants, by default, we use alpha complex for both the landmark selection and evaluation of the subsequent PD. For high-dimensional Grid Cells data \cite{gardner2022toroidal}, Rips complex is used instead. This is intended to show TRLS is agnostic to the complex choice. In addition, we show that different complex choices between score/sample and evaluation stage can give the same best method in Appendix \ref{appendix: 7}. Details on experimental setting and datasets are provided in Appendix \ref{appendix: exp setting}.


\subsection{Validation on Synthetic Datasets}
Our synthetic benchmarks are designed to separate topologically relevant structures from irrelevant parts and outliers. \textsc{Interlocked rings} test H1 preservation, while \textsc{sphere-in-circle} and \textsc{torus-in-circle} contain mixed H1 and H2 signals together with connectors, clutter, and outliers. A summary is given in Table \ref{tab:synthetic-datasets}.
\begin{table}[h]
\centering
\small
\caption{Synthetic benchmark summary. Corresponding visualizations are given in the first column of Figure \ref{fig:three_dataset_comparison}.}
\label{tab:synthetic-datasets}
\setlength{\tabcolsep}{2.4pt}
\begin{tabular}{lcccc}
\toprule
Dataset & Total & Relevant/Irrelevant & Outlier & $Q$ \\
\midrule
\textsc{interlocked rings} & 3000 & 2700/0 & 300 & $\{1\}$ \\
\textsc{sphere-in-circle} & 3000 & 2433/267 & 300 & $\{1,2\}$ \\
\textsc{torus-in-circle} & 4220 & 2400/ 1400 & 420 & $\{1,2\}$ \\
\bottomrule
\end{tabular}
\end{table}

Figure~\ref{fig:three_dataset_comparison} shows that TRLS often selects a higher fraction of signal points while achieving the lowest Wasserstein distance to the signal PD. TopoGraph is the second best method. As shown in Figure \ref{fig:tic_topograph}, the landmarks selected by TopoGraph contains the topologically irrelevant points from line connector and solid spheres on the \textsc{torus-in-circle} dataset because TopoGraph does not consider the homological information. 

\begin{figure}[h]
    \centering
\includegraphics[width=.8\linewidth]{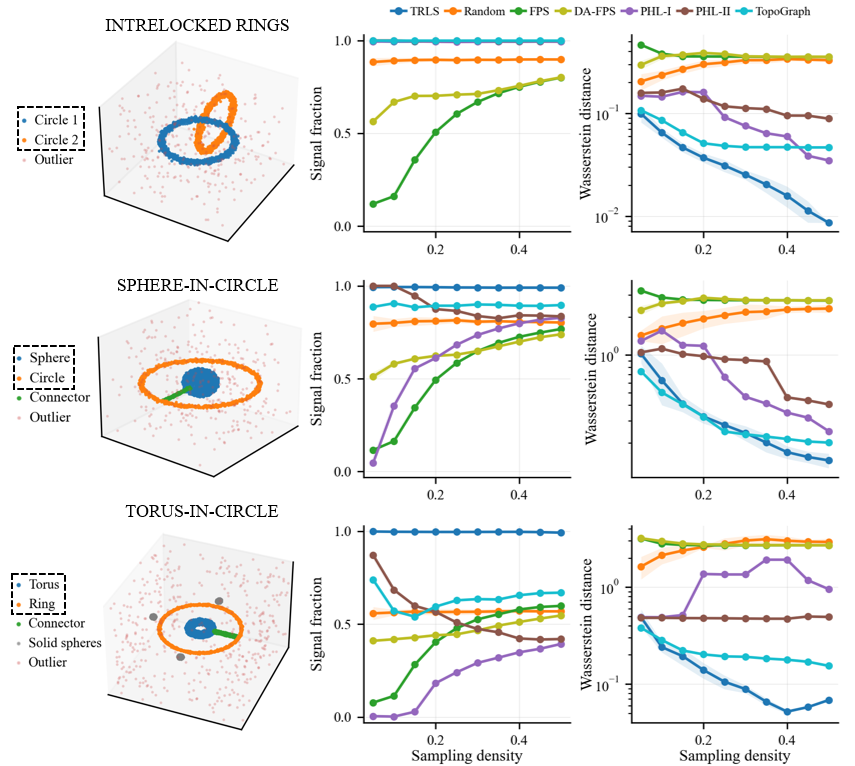}
    \caption{
    Signal fractions and Wasserstein distances ($Q=\{1,2\}$) of different methods over multiple sampling densities on three synthetic point clouds. Labels of topologically relevant points are enclosed by a dashed rectangle.}
\label{fig:three_dataset_comparison}
\end{figure}
PHL-\MakeUppercase{\romannumeral 1} tends to have high scores on points whose neighbors form a Euclidean-like surface, as shown in Figure \ref{fig:tic_score}. But it selects topologically irrelevant points. PHL-\MakeUppercase{\romannumeral 2} only consider $q=1$, so it has high scores of points whose local neighborhood is like 2-dimensional surface. The points in torus are assigned with high score. Hence, it misses the circle and include topologically irrelevant  points from solid spheres.  

In contrast, Figure \ref{fig:tic_score} shows that TRLS explicitly scores points by their contribution to the topological features, leading to samples that better restore signal PDs, as shown in Figure \ref{fig:tic_PD}. 


\begin{figure}[h]
    \centering
\includegraphics[width=.7\linewidth]{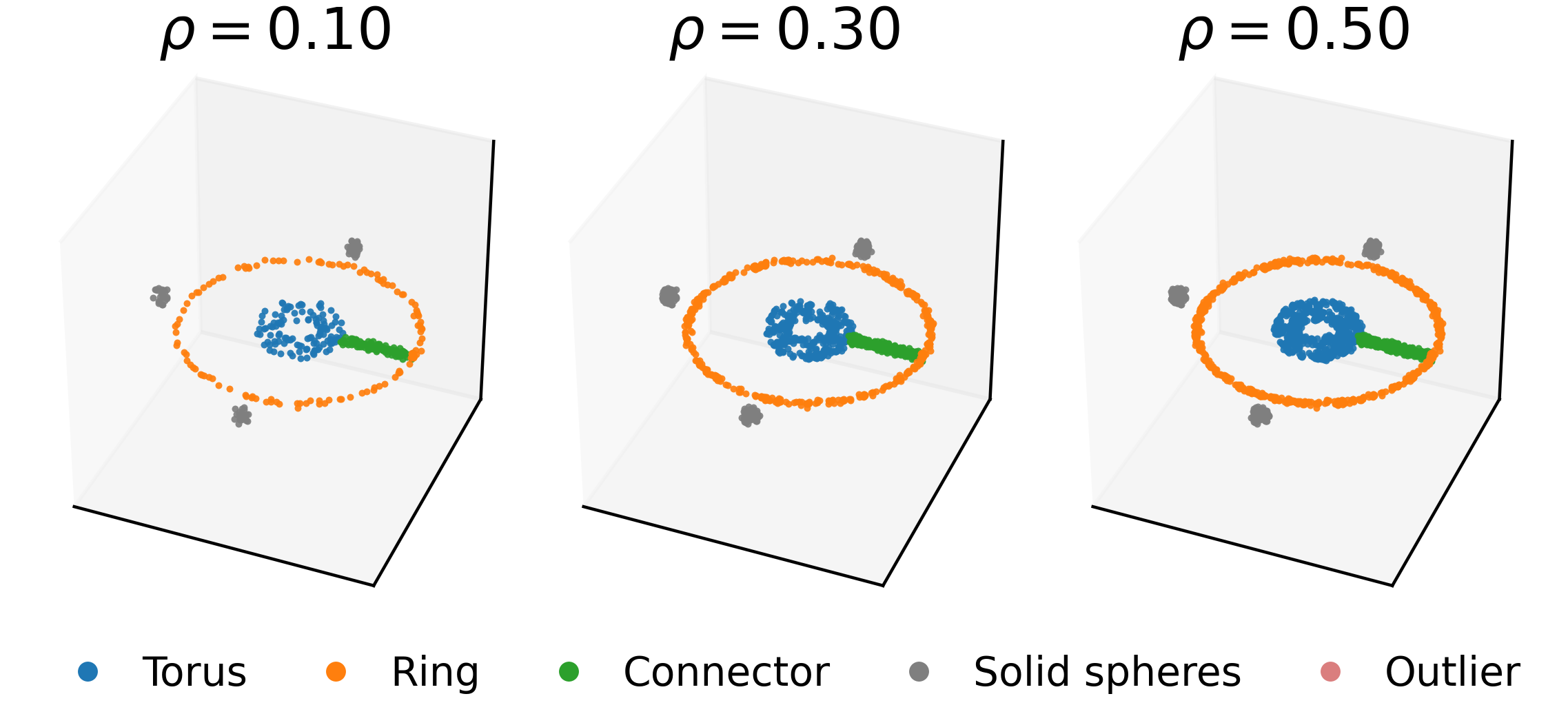}
    \caption{Landmarks selected by TopoGraph under different sampling densities ($\rho$) on the \textsc{torus-in-circle} dataset. 
    }
    \label{fig:tic_topograph}
\end{figure}

\begin{figure}[h]
    \centering
\includegraphics[width=.8\linewidth]{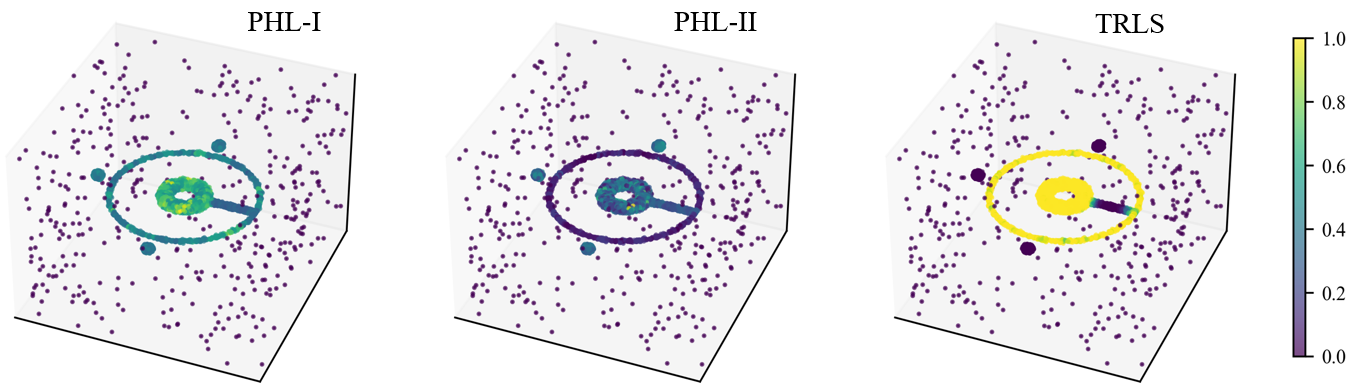}
    \caption{
Normalized scores of PHL-\MakeUppercase{\romannumeral 1}, PHL-\MakeUppercase{\romannumeral 2} and TRLS on \textsc{torus-in-circle} dataset.
    }
    \label{fig:tic_score}
\end{figure}

\begin{figure}[h]
    \centering
\includegraphics[width=.85\linewidth]{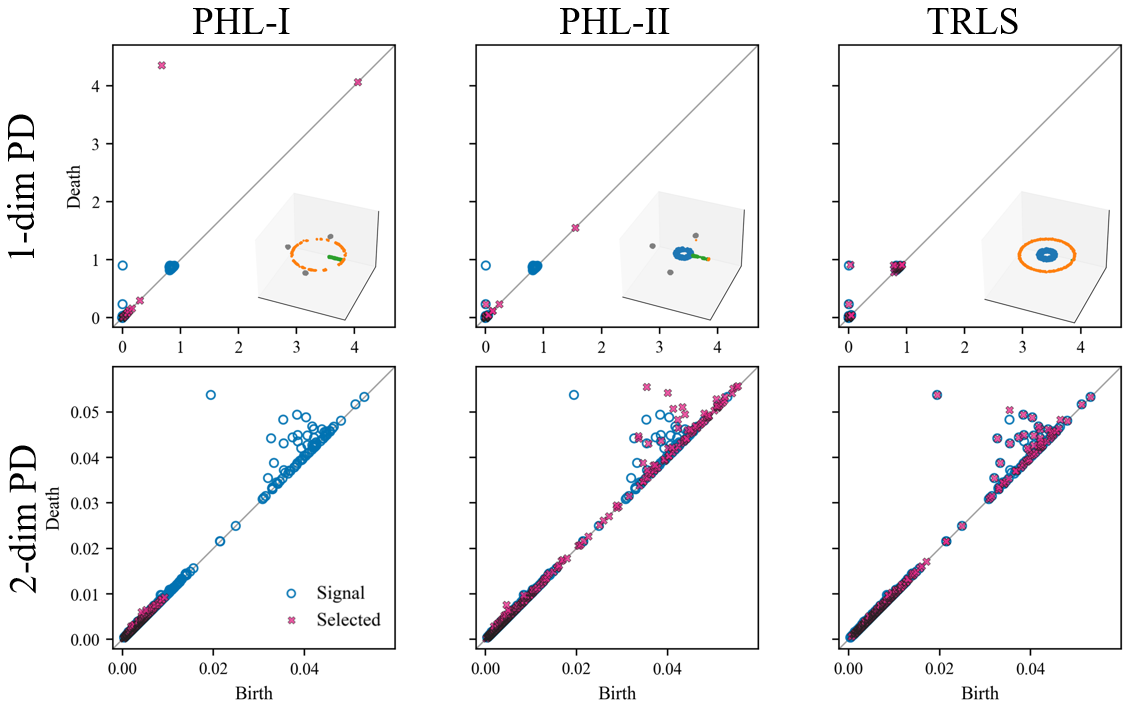}
    \caption{
PDs on the original topologically relevant points and selected landmarks in \textsc{torus-in-circle} dataset selected by PHL-\MakeUppercase{\romannumeral 1}, PHL-\MakeUppercase{\romannumeral 2} and TRLS. The first row also shows the selected landmarks at sampling density $\rho=0.2$.  }
    \label{fig:tic_PD}
\end{figure}

\begin{figure}[h]
    \centering
\includegraphics[width=.92\linewidth]{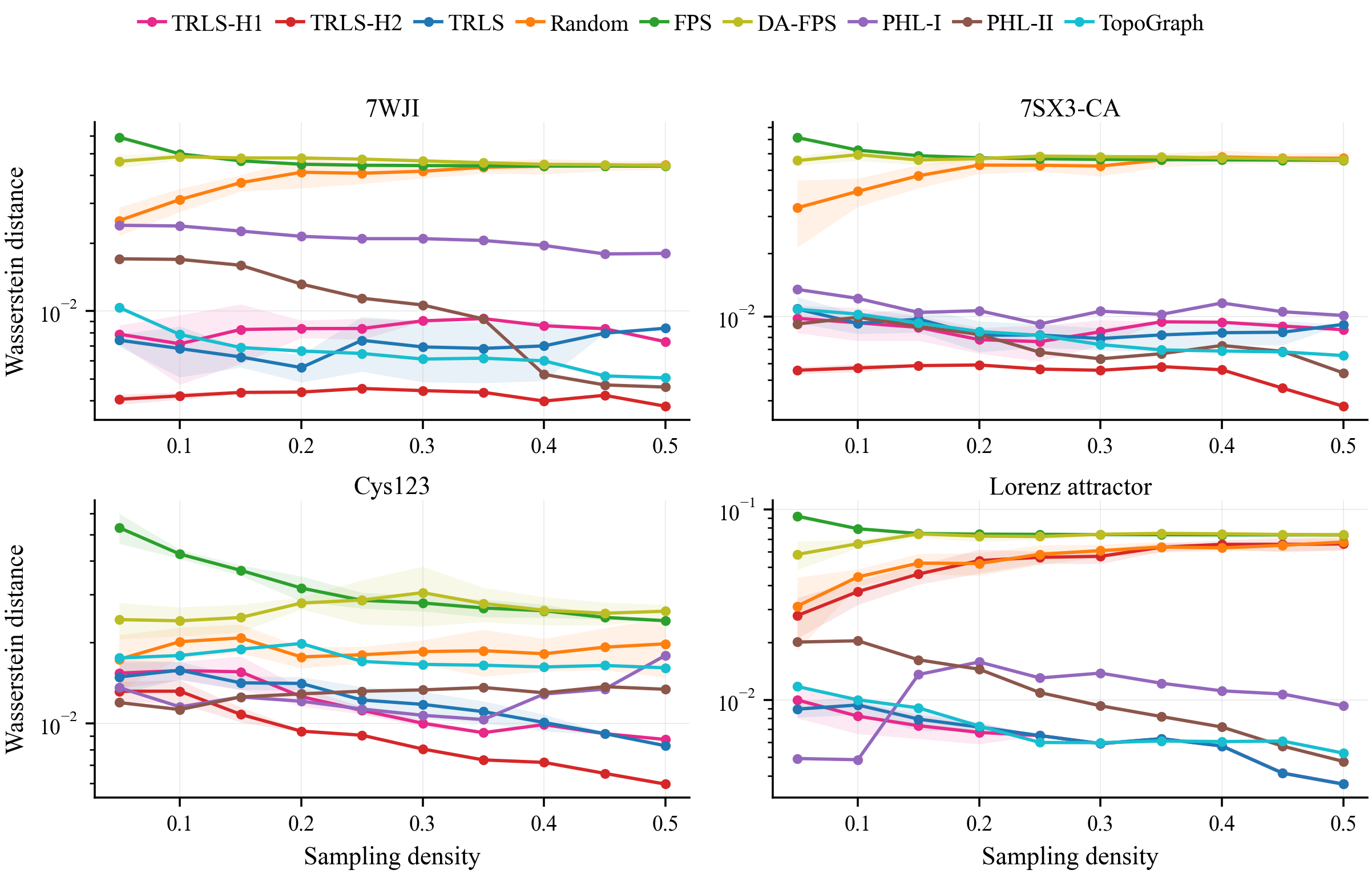}
    \caption{
Wasserstein distances between PDs on protein and dynamic system datasets.
    }
\label{fig:nalcn_wasserstein}
\end{figure}


\subsection{Comparison on Scientific Domain Datasets}
We compare TRLS with baselines on three protein structure point clouds: NALCN channelosome (7WJI\&7SX3-CA) and mutated Cys123 protein, which were also used by \citet{tpcc,grande2025pointlevel}. 7WJI\&7SX3-CA have circles as the main characteristics, reflecting their function as membrane protein. Cys123 has void as main feature as protein pockets, which are crucial for protein-environment interactions. In addition, we use a dynamical system dataset \citep{grande2025pointlevel} containing points sampled in the state space of a Lorenz attractor. Two circles correspond to the two lobes of the attractor. There are no topologically irrelevant points in these four datasets. We add $10\%$ uniform outliers in each point cloud. \textit{For these datasets without point-level relevance labels, we use the original point cloud before outlier injection as the clean signal reference and evaluate topology recovery using Wasserstein distance. Only the outliers are non-signal.}

We compare TRLS with TRLS-H1 and TRLS-H2, which use only Hodge Laplacian $L_1$ and $L_2$, respectively, while the default TRLS uses both for dimension agnostic selection. Evaluation compares the selected PD with the clean signal PD in the Problem relevant dimension: \(q=1\) for NALCN and Lorenz, and \(q=2\) for Cys123, matching their circle and cavity structures. Wasserstein distance results are shown in Figure~\ref{fig:nalcn_wasserstein}. Corresponding results on signal fraction are given in Appendix \ref{appendix: 2}. The results of the TRLS variants on the previous synthetic datasets are analyzed in Appendix \ref{appendix: 1}.

On both NALCN channelosome datasets, TRLS and its variants consistently outperforms Random, FPS, DA-FPS across sampling densities, indicating that topology-aware scores preserve the clean protein circle structure more effectively than geometric or uniform sampling. TRLS-H2 is particularly strong, suggesting that higher-dimensional Hodge information acts as a useful structural regularizer even when the evaluation metric is defined on 1-dimensional PH. PHL baselines and TopoGraph are competitive, but TRLS-H2 attains the smallest Wasserstein distance. 

On Cys123, TRLS-H2 performs the best, consistent with the dataset’s void-like protein pocket structure. This shows that using domain knowledge to choose the target Hodge dimension can improve landmark selection.

On Lorenz attractor dataset, TRLS-H1 and fused TRLS perform best, matching the circle topology of the attractor. This confirms that TRLS can adapt to the relevant dimension, while baselines using geometry alone are less targeted. TopoGraph performs similarly with TRLS on this dataset.

\subsection{Extension on High-dimensional Grid Cells Data}
We use a grid cell population activity dataset \cite{gardner2022toroidal} represented as a point cloud of 10,000 time bins in 544-dimensional neural activity space. We avoid adding outliers here to reduce computational burden. Signal fraction is not reported here because there are no topologically relevance labels for this dataset. Alpha complex cause Out of Memory (OOM) error when computing PD of the entire dataset due to the high dimensionality. So pairwise correlation distances between activity vectors are used as the input metric for Rips complex.

\begin{figure}[h]
    \centering
\includegraphics[width=\linewidth]{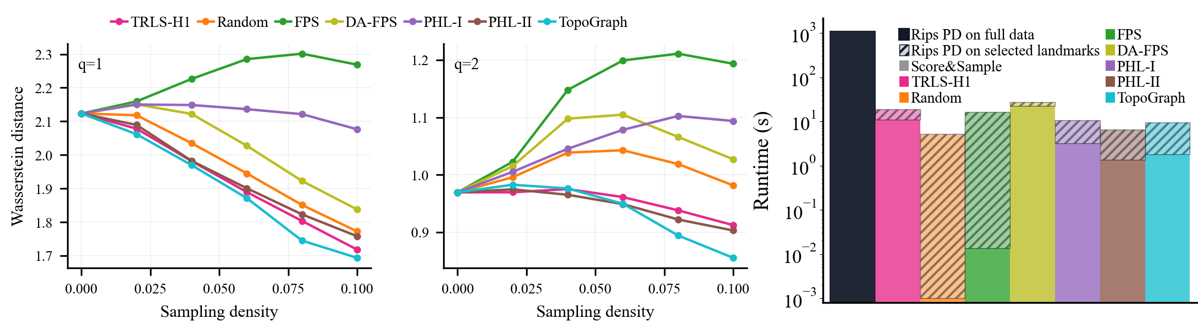}
    \caption{
 The left panel shows Wasserstein distances between Rips PD of the full data and that of selected landmarks with specific homology dimensions $q=1$ and $2$, respectively. The right panel shows runtime comparison when $\rho=0.1$.
    }
\label{fig:grid_cell}
\end{figure}

We compare TRLS-H1 with all the baselines in Figure \ref{fig:grid_cell}. We only consider TRLS-H1 because there are more than 30 million tetrahedra in Rips complex when $L_2$ is considered, causing OOM error. As shown in the right panel of Figure \ref{fig:grid_cell}, when $\rho\leq0.075$, TRLS-H1 performs similarly to PHL-II and TopoGraph. And TopoGraph leads to smaller Wasserstein distance when $\rho>0.075$.

The right panel of Figure \ref{fig:grid_cell} shows that the total time of scoring/sampling landmarks and computing Rips PD on them is around two orders of magnitude lower than that of computing Rips PD on the full distance input. The result shows that TRLS-H1 adds a moderate scoring cost, but substantially reduces the subsequent Rips computation. This supports the use of TRLS as a practical preprocessing step when full Rips PD is computationally expensive.


\subsection{Robustness of TRLS}
We test the robustness of TRLS against outlier and noise in Figure \ref{fig:torus_in_circle_outlier_robustness}. The performance of random sampling, FPS and DA-FPS degrades with increasing outlier ratios. The rest method remain rather stable, with TRLS having the largest signal fraction and smallest Wasserstein distance, followed by TopoGraph and PHL-II. 

At the bottom row, TRLS has the highest signal fraction over different noise stds and similar Wasserstein distance to TopoGraph, follow by PHL-II. Overall, TRLS performs in a robust way against outlier and noise.

\begin{figure}[h]
    \centering
\includegraphics[width=.95\linewidth]{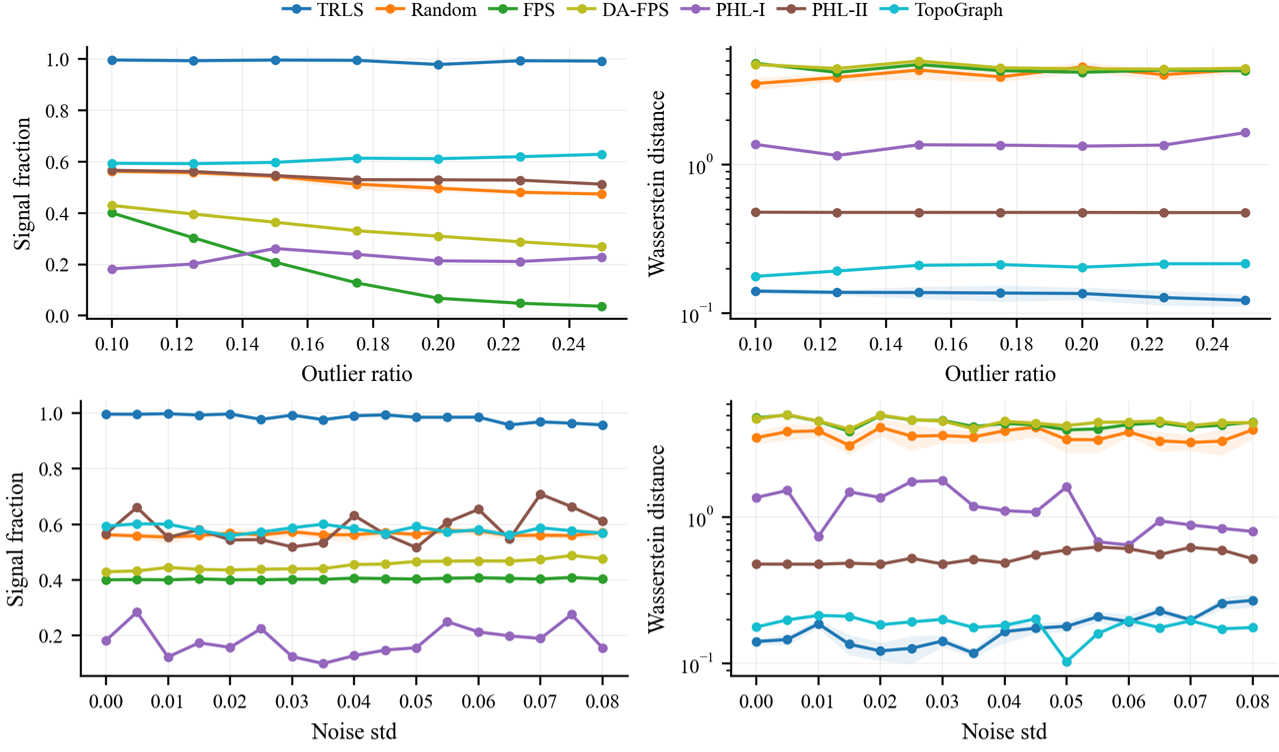}
    \caption{
Results under different outlier ratios (top row, no Gaussian noise) and noise stds of Gaussian perturbation (bottom row) on the \textsc{torus-in-circle} dataset when $\rho=0.3$.
    }
\label{fig:torus_in_circle_outlier_robustness}
\end{figure}


\section{Discussion}
\paragraph{Time \& Space Complexity.}
The computation bottleneck of TRLS lies in computing the kernel space of $L_q$. Given the eigen budget $R$ (user-specified number of topological features in interest or the upper bound on the number of near-zero eigenvectors), sparse Lanczos eigensolver \citep{lehoucq1998arpack,virtanen2020scipy} achieves $O(RM_q)$ time and $O(M_q+RN_q)$ space, with $M_q$ as the number of non-zeros in $L_q$ and $N_q$ as $q$-simplex count. Appendix \ref{appendix: scale-up} shows TRLS scales competitively with PHL and TopoGraph.

\paragraph{Hyperparameter Sensitivity Analysis.} 
We include a sensitivity study on $k \& \tau$ in Appendix \ref{app:p0a_sensitivity}. Across all datasets, the default values lie in a stable region rather than an isolated optimal point. Hence the default setting is a practical choice.

\paragraph{Ablation Study.} We conduct ablation study in Appendix \ref{appendix: ablation on RKDE}. Without RKDE, TRLS is less stable under increasing outliers and performs poorly. Poor performance of using reliability weights alone shows that RKDE improves complex construction, but cannot replace the topology-aware Hodge scoring. This is because RKDE alone does not encode topology and it is a robust complex construction prior rather than topology-aware score.

\paragraph{Advantage of TRLS over TopoGraph}
Beyond the frequent performance advantage of TRLS over TopoGraph, TRLS supports homology dimension-specific landmark selection through the corresponding Hodge Laplacian, as shown in Appendix \ref{appendix: 1}. For example, on \textsc{Sphere-in-circle}, TRLS-H1 concentrates on the circle, whereas TRLS-H2 selects the sphere. TopoGraph does not explicitly encode global homological structure or distinguish between homology dimensions. Therefore TopoGraph selects points from the sphere, circle, and connector without distinguishing their topological roles.


\section{Conclusion}

We proposed TRLS, a topology-aware landmark selection method for persistent homology. TRLS uses RKDE-guided complex construction and Hodge-Laplacian harmonic energy to score points by their contribution to global topological structures, rather than by geometry or local topology.

Across synthetic and real-world datasets, TRLS often outperforms the baselines based on geometry, graph coarsening or local PH. Its dimension-specific variants show that biological or physical prior knowledge can guide targeted landmark selection, while robustness and runtime experiments demonstrate its practical value for noisy and large point clouds.




\backmatter

\bmhead{Supplementary information}

Code and data can be found at supplementary material.

\begin{appendices}

\section{Additional Experiments}
\subsection{Persistence Diagrams of the Landmarks Selected by TRLS Using Different Homology Dimensions}
\label{appendix: 1}

Figure \ref{fig:ablation_PD_SIC} and \ref{fig:ablation_PD_TIC} shows the PDs on the landmarks selected by TRLS and its variants. Figure \ref{fig:ablation_PD_SIC_point} and \ref{fig:ablation_PD_TIC_point} show that the TRLS Hodge-derived scores are strongly dimension-aware and that the fused score preserves the topologically relevant points from both homological dimensions. Corresponding results of signal fraction and Wasserstein distance are given in Figure \ref{fig:ablation_dimension}.

\begin{figure}[h]
    \centering
\includegraphics[width=.8\linewidth]{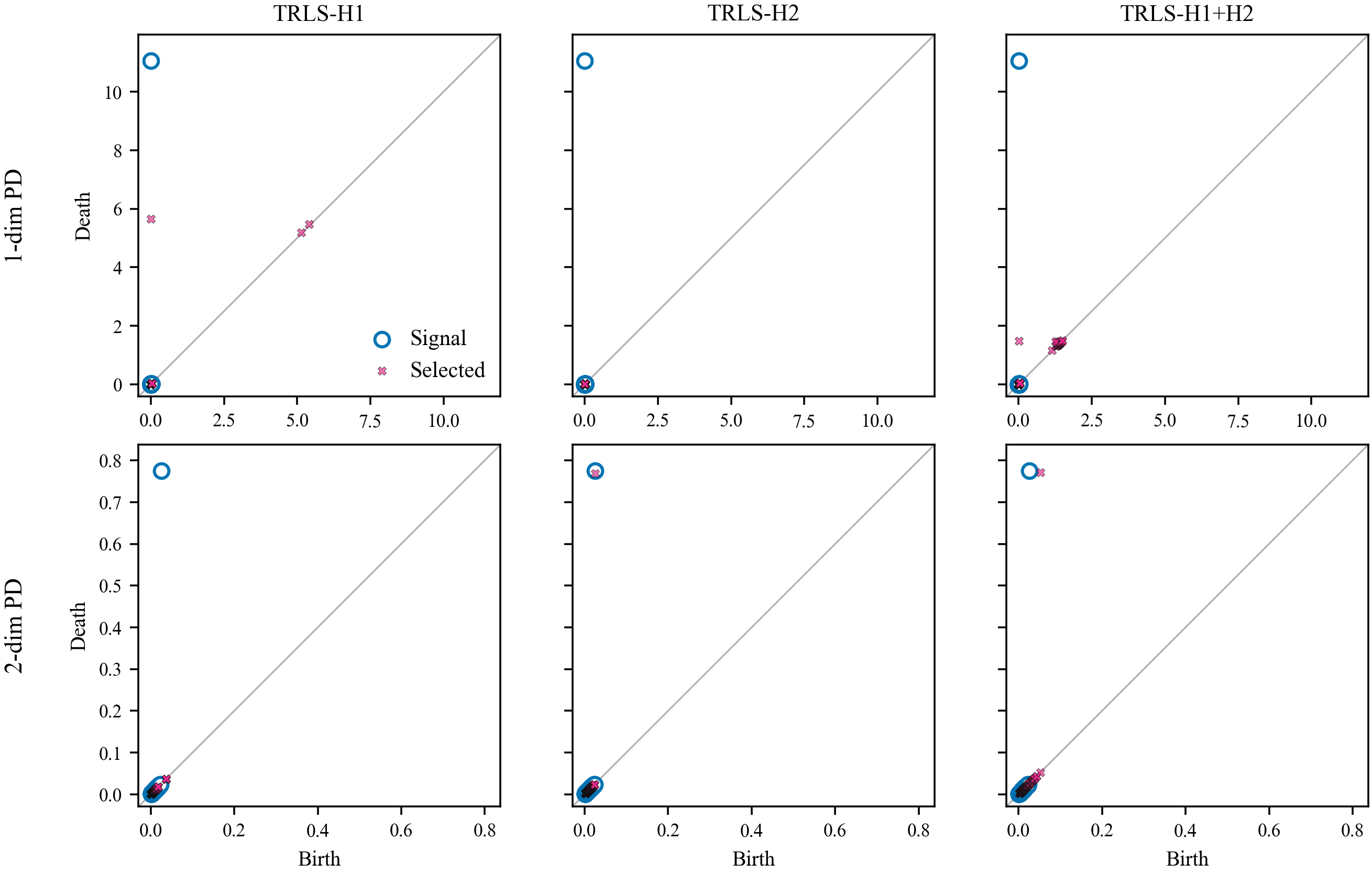}
\caption{
PDs on the landmarks selected by TRLS on \textsc{sphere-in-circle} dataset.
}
 \label{fig:ablation_PD_SIC}
\end{figure}

\begin{figure}[h]
    \centering
\includegraphics[width=.8\linewidth]{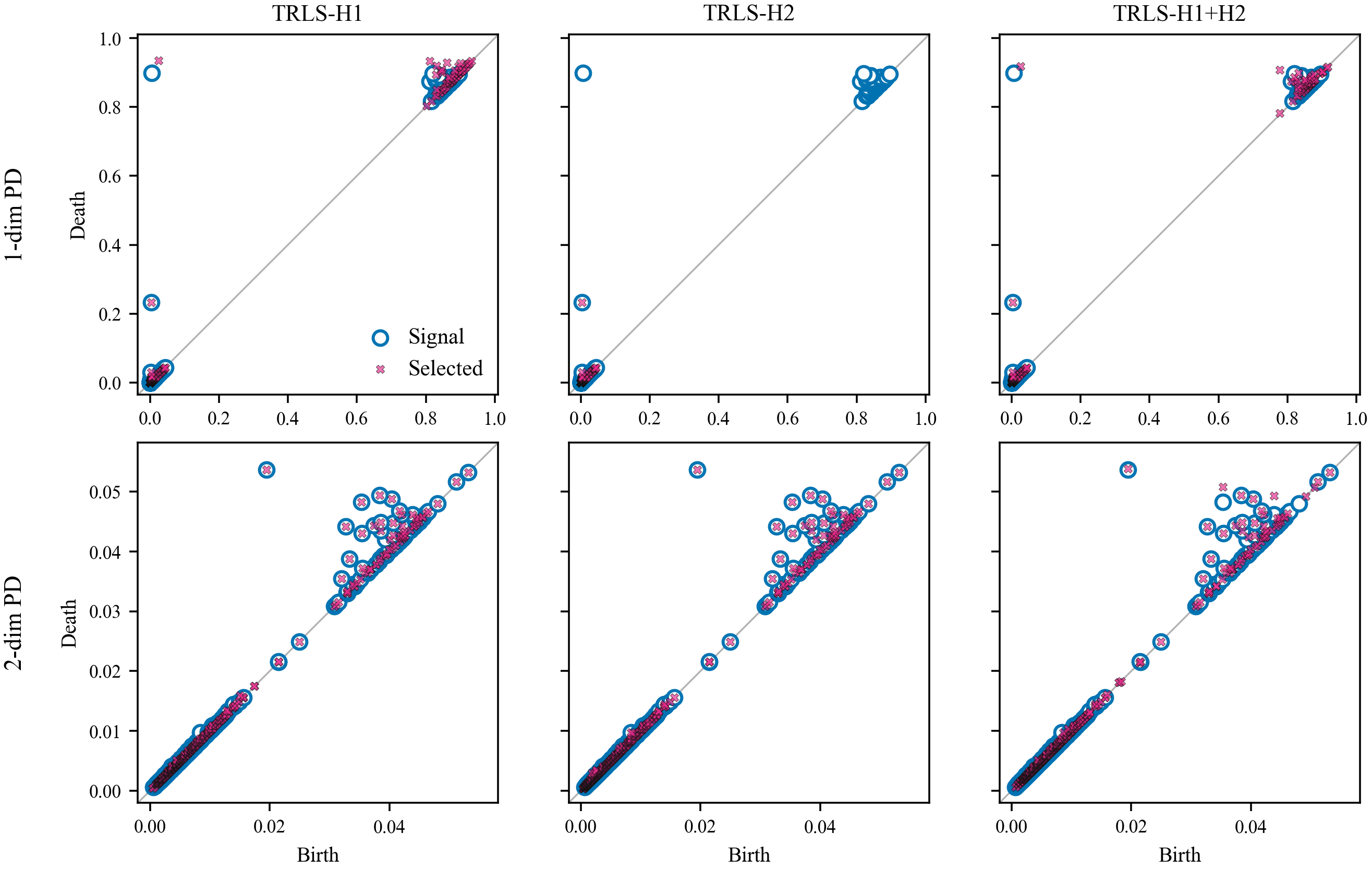}
\caption{
PDs on the landmarks selected by TRLS on \textsc{torus-in-circle} dataset.
}
\label{fig:ablation_PD_TIC}
\end{figure}

\begin{itemize}
    \item \textbf{Dimension-wise selectivity.}
    On \textsc{sphere-in-circle}, as shown in Figure \ref{fig:ablation_PD_SIC_point}, TRLS-H1 primarily selects the circular component, while the TRLS-H2 almost exclusively selects the sphere. At a large sampling density $\rho=0.4$, TRLS-H1 selects $81.1\%$ circle points (the rest selected outliers and points from line connector would shorten the H1 persistence, as shown in the first subplot in Figure \ref{fig:ablation_PD_SIC}), whereas TRLS-H2 selects $99.7\%$ sphere points. This shows that the Hodge-derived score can separates 1-dimensional and 2-dimensional topological evidence. On \textsc{torus-in-circle} dataset, as shown in Figure \ref{fig:ablation_PD_TIC}, TRLS-H1 alone can give a good recovery of the PDs because the torus has both 1-dimensional and 2-dimensional topological features. Results under different sampling densities are provided in Figure \ref{fig:ablation_dimension}.

    \item \textbf{Effect of score fusion.}
    As shown in Figure \ref{fig:ablation_PD_SIC_point}, the fused score balances both dimensions: on \textsc{sphere-in-circle}, it selects both sphere and circle points while suppressing connector and outlier points. On \textsc{torus-in-circle} in Figure \ref{fig:ablation_PD_TIC_point} and \ref{fig:ablation_dimension}, it selects $63.9\%$ torus points and $35.8\%$ ring points, with only $0.3\%$ connector points. Thus, fusion does not simply average noisy scores; it preserves multiple topologically relevant regions.
    \begin{figure}[h]
    \centering
\includegraphics[width=.98\linewidth]{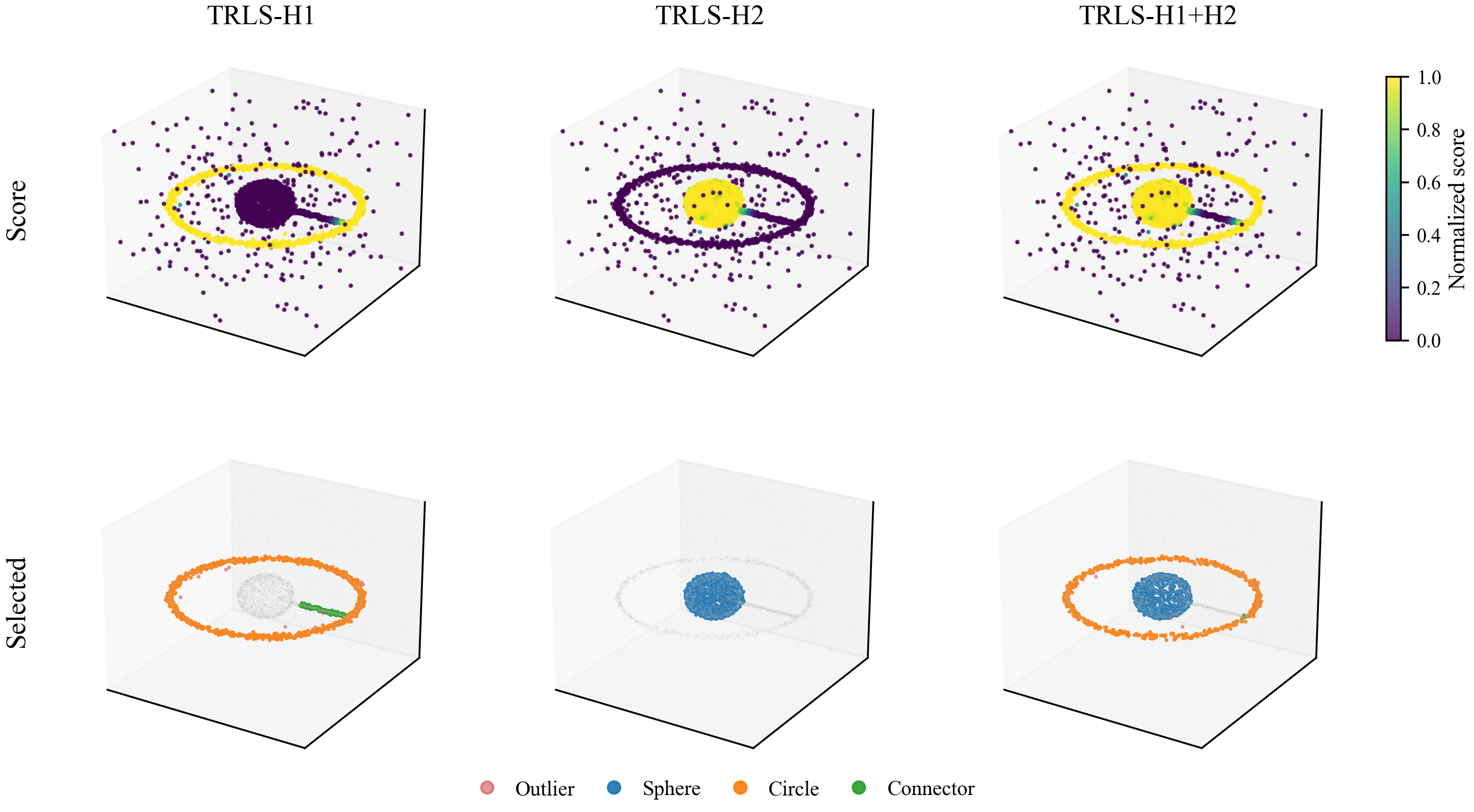}
\caption{
The top row shows Hodge scores of TRLS on sphere-in-circle dataset. The bottom row shows the seleced landmarks when $\rho=0.4$.
}
    \label{fig:ablation_PD_SIC_point}
\end{figure}
\begin{figure}[h]
    \centering
\includegraphics[width=.98\linewidth]{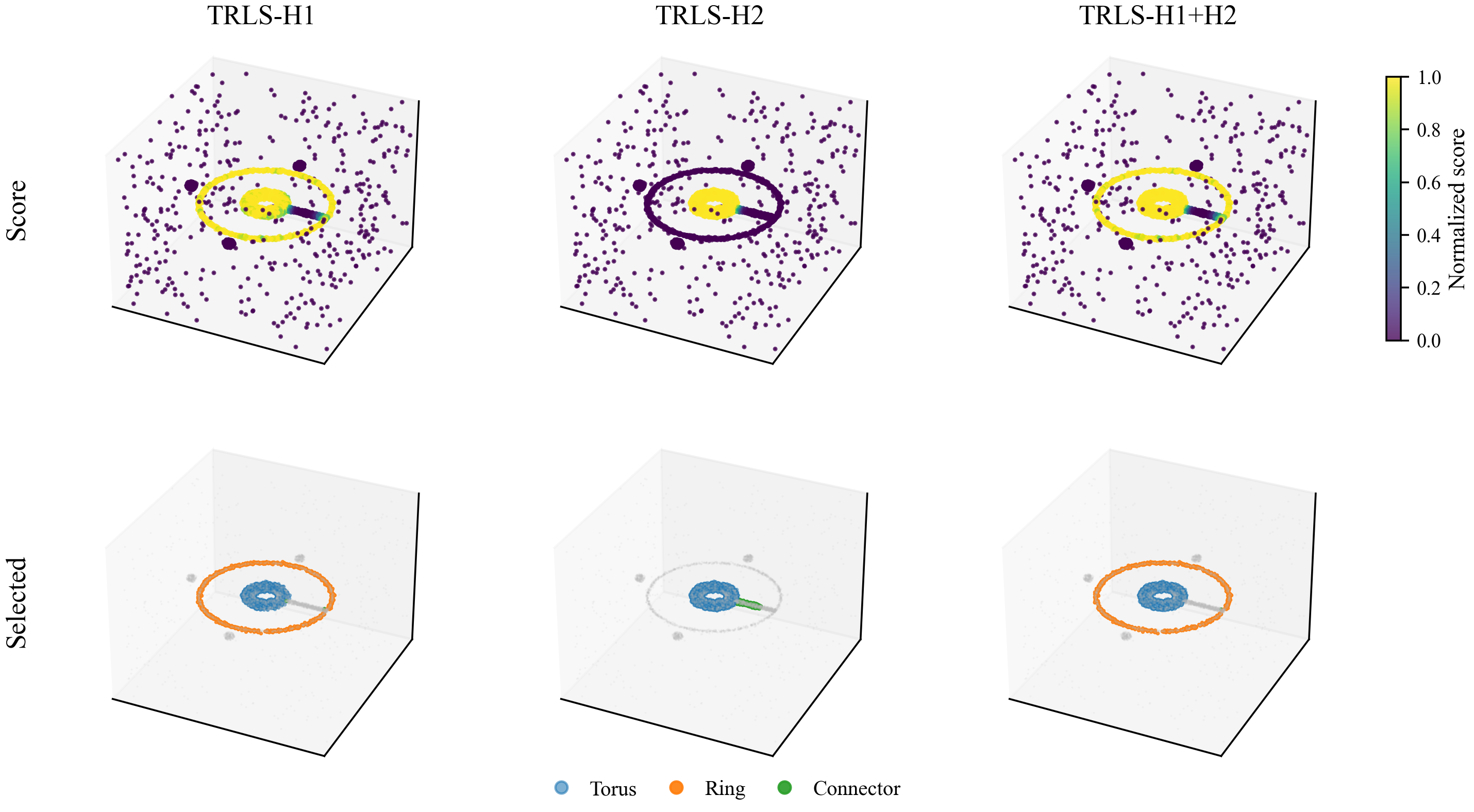}
\caption{
The top row Hodge scores of TRLS on torus-in-circle dataset. The bottom row shows the seleced landmarks when $\rho=0.4$.
}
    \label{fig:ablation_PD_TIC_point}
\end{figure}

\begin{figure*}[h]
    \centering
\includegraphics[width=.98\linewidth]{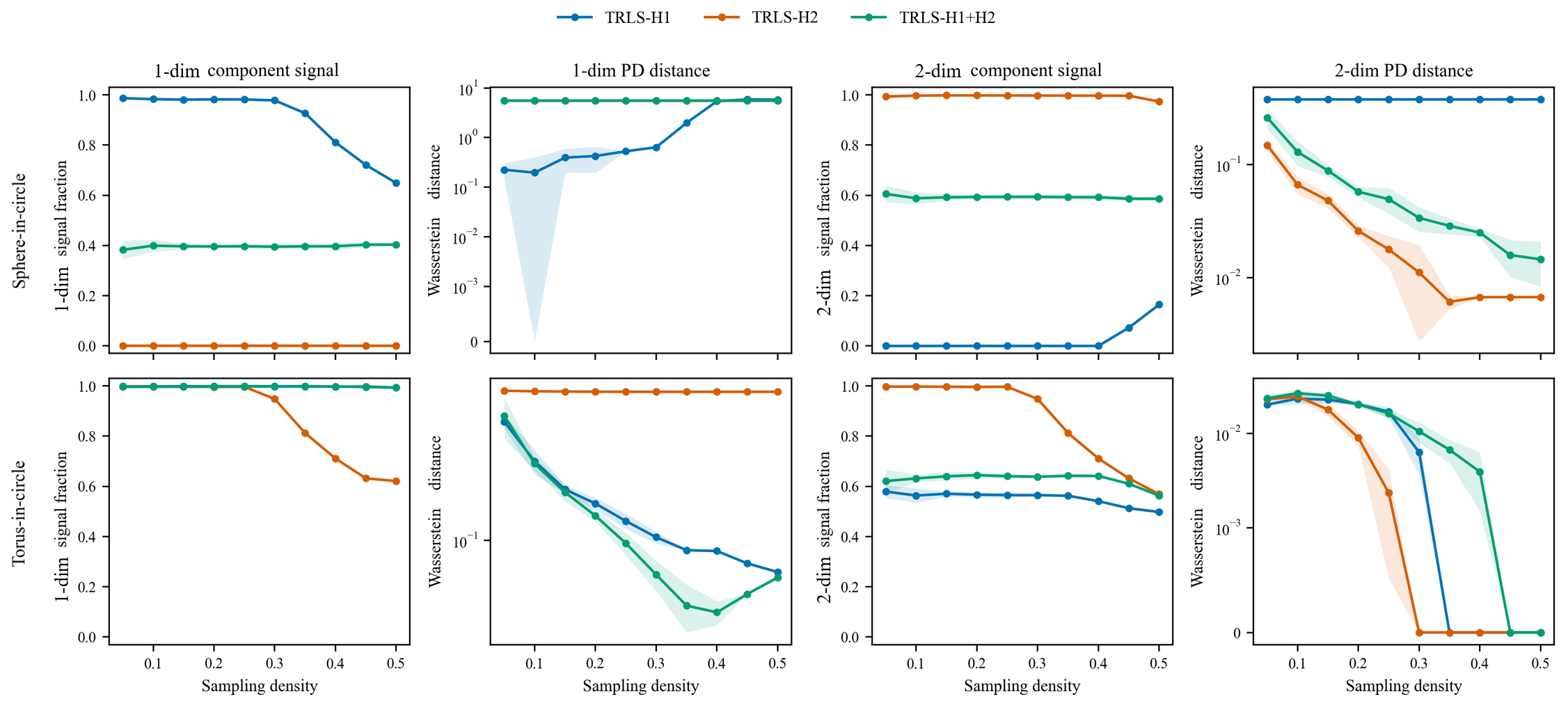}
\caption{
Evaluation of TRLS using only the H1 score, only the H2 score, and the fused H1+H2 score on \textsc{sphere-in-circle} and \textsc{torus-in-circle} datasets. The H1 component signal fraction means that we only consider the point cloud part that has non-empty 1-dimensional topological feature as topological relevant points. In \textsc{sphere-in-circle}, the topological relevant points $X_\mathrm{r}$ are the points in circle. In \textsc{torus-in-circle}, they are the circle and tours. Similarly, for H2 component signal, the topological relevant points are sphere for \textsc{sphere-in-circle} and torus for \textsc{torus-in-circle}.
}
    \label{fig:ablation_dimension}
\end{figure*}

    \item \textbf{Point cloud region having topological features with multiple homology dimensions in \textsc{torus-in-circle} dataset.} As shown in the second column of Figure  \ref{fig:ablation_dimension}, a counterintuitive phenomenon is that considering H2 in TRLS in addition to H1 can reduce the Wasserstein distance between 1-dimensional PDs on \textsc{torus-in-circle} dataset. At $\rho=0.4$, TRLS-H1 achieves $W=0.0896$, while the fused score further improves it to $W=0.0479$. This is because the torus part has both 1-dimensional and 2-dimensional topological features. The consideration of H2 will emphasize the torus part.

    \item \textbf{Overall implication.}
    The results support the intended role of TRLS: Hodge-derived scores can identify topologically relevant regions for each homology dimension, and the fused version provides a practical version when structures of all homology dimensions are considered to be preserved simultaneously.
\end{itemize}

\subsection{Signal Fraction Results on Scientific Domain Datasets}
\label{appendix: 2}

\begin{figure}[h]
    \centering
            \includegraphics[width=.8\linewidth]{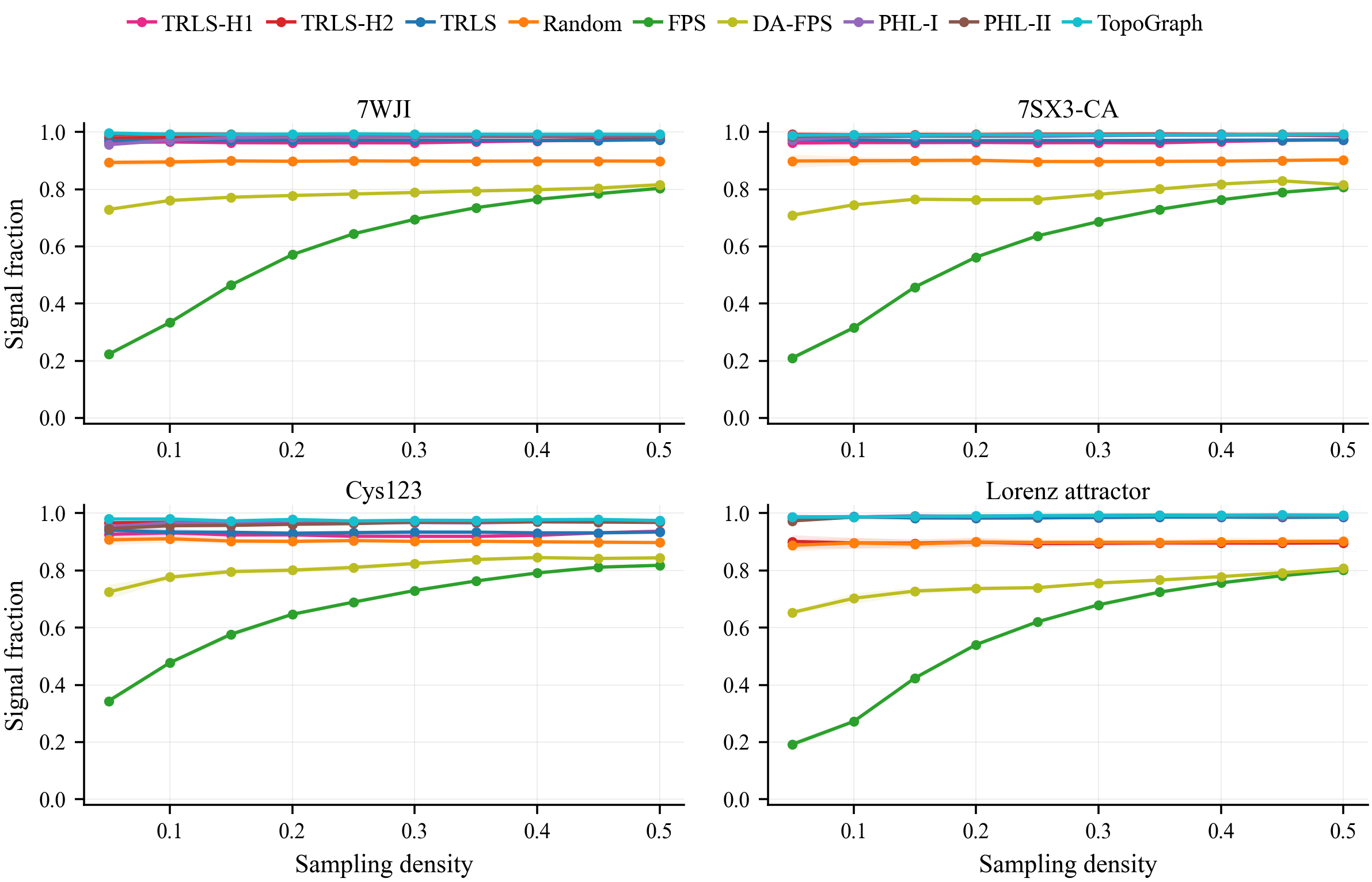}
    \caption{
    Signal fraction of selected landmarks on four scientific domain datasets.
    }
    \label{fig:nalcn_signal_fraction}
\end{figure}

We compare TRLS and its variants with Random, FPS, PHL and TopoGraph across different sampling densities in Figure \ref{fig:nalcn_signal_fraction}. On the NALCN channelosome 7WJI, 7SX3-CA and Lorenz datasets, TRLS and its variants have similar signal fraction (close to 1.0) to TopoGraph, PHL and its variants. This shows that both TRLS, PHL and TopoGraph can identify outliers. FPS performs the worst since it only considers spatial coverage and hence tends to choose outliers. DA-FPS improves over FPS, but it still suffers from the same issue. Random sampling is better that FPS and DA-FPS. It selects less outliers. But random sampling is worse than PHL, TRLS and their variants since random sampling does not consider geometry or topology of a point cloud at all. 

\subsection{Sensitivity of $k\&\tau$ to Scale Selection }
\label{app:p0a_sensitivity}

We evaluate whether the default TRLS hyperparameters are stable or merely a lucky configuration. 
This experiment focuses on the two parameters used by the RKDE-guided scale heuristic: the neighborhood size $k_\alpha$ and the weighted quantile $\tau_\alpha$. 
For each dataset, we sweep
$k_\alpha \in \{2,4,6,8,10,12,16,20,24,32\}, 
\tau_\alpha \in \{0.50,0.60,0.70,0.75,0.80,0.85,0.90,0.95,0.975,0.99\},
$
with the default setting marked by a cyan circle in Figure~\ref{fig:p0a_sensitivity}. 
We report both the Wasserstein distance between the selected PD and the signal PD, and the fraction of selected landmarks that belong to the signal set.

\begin{figure}[h]
    \centering
    \includegraphics[width=\textwidth]{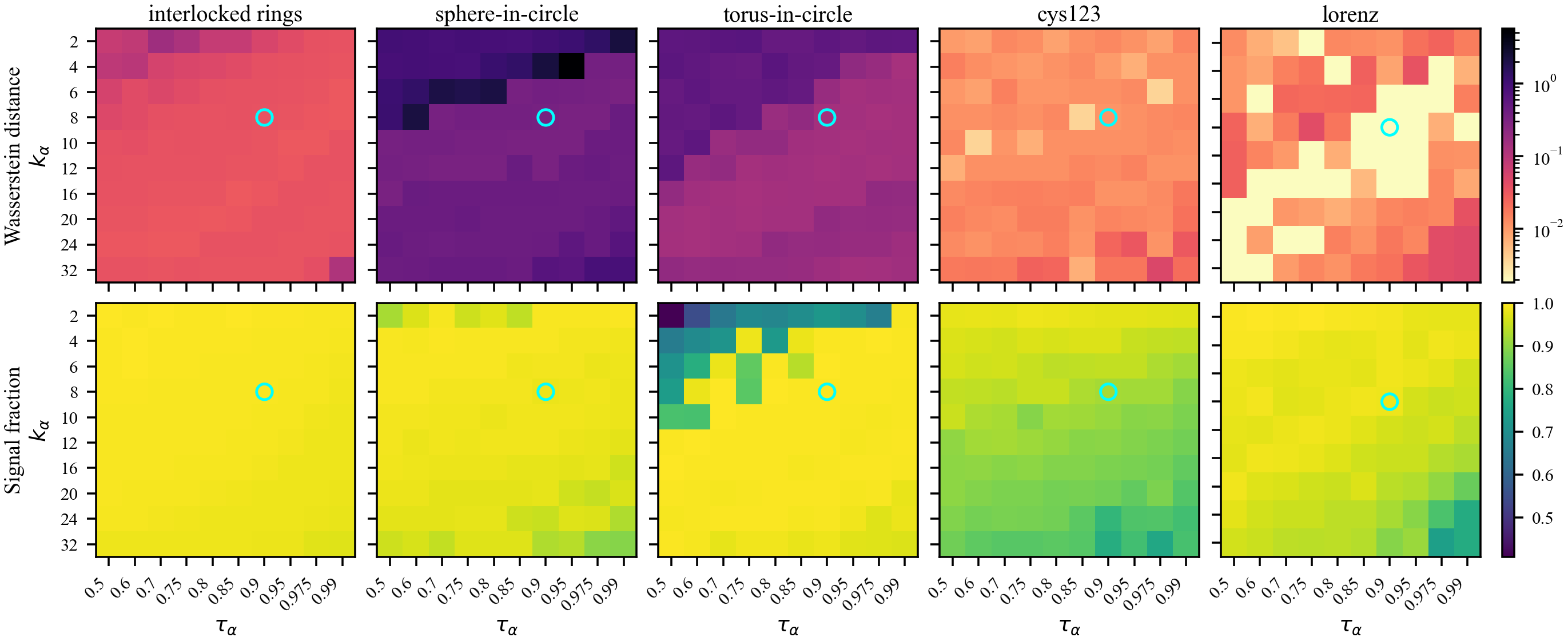}
    \caption{
    Sensitivity of TRLS to the scale heuristic parameters $k_\alpha$ and $\tau_\alpha$ on non-NALCN datasets. 
    Each column corresponds to one dataset. The top row shows Wasserstein distance, and the bottom row shows signal fraction. 
    Cyan circles mark the default setting $k_\alpha=8,\tau_\alpha=0.90$. 
    The default setting lies in a stable region for most datasets, rather than being an isolated optimum.
    }
    \label{fig:p0a_sensitivity}
\end{figure}

\begin{figure}[h]
    \centering
    \includegraphics[width=.8\linewidth]{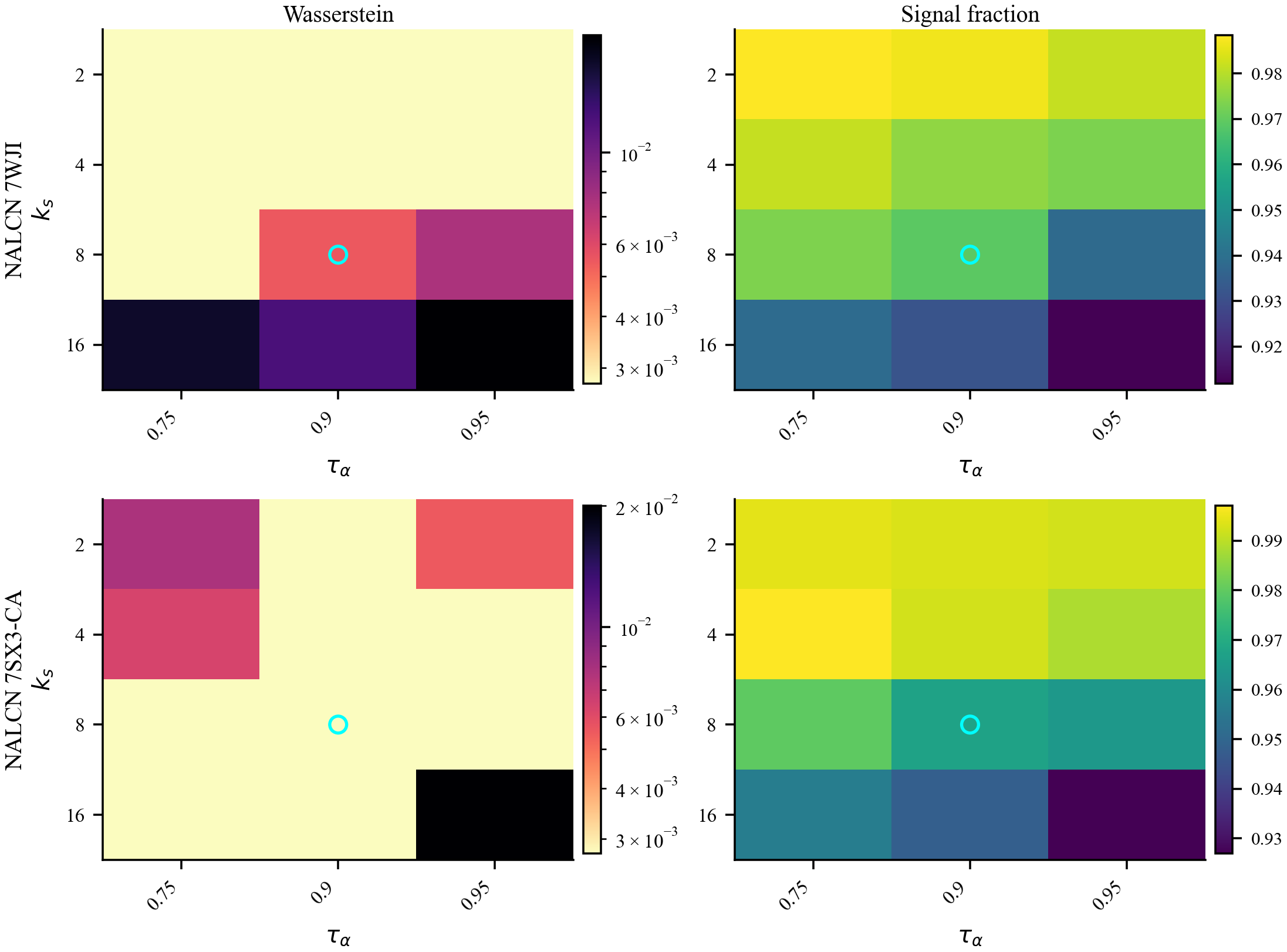}
    \caption{
    Sensitivity of TRLS to $k_\alpha$ and $\tau_\alpha$ on the two NALCN point clouds.
    }
    \label{fig:p0a_nalcn_sensitivity}
\end{figure}

The results show that the default setting is not an isolated lucky point. 
On \textsc{torus-in-circle} dataset, the default achieves Wasserstein distance $0.1427$ and signal fraction $0.9964$, while the best grid point only slightly improves the distance to $0.1342$. 
On \textsc{sphere-in-circle} dataset, the default obtains $0.3517$ with signal fraction $0.9833$, close to the best observed distance $0.3299$. 
On \textsc{interlocked rings}, the default Wasserstein distance is $0.0357$, close to the median behavior across the grid. 
On Cys123, the default is also competitive, although several lower or middle scale settings perform better, indicating that \textbf{the default setting is robust but not always optimal}. 
Lorenz is more sensitive in Wasserstein distance. The default keeps a high signal fraction of $0.9850$.


We further evaluate the scale sensitivity on the two NALCN point clouds in Figure \ref{fig:p0a_nalcn_sensitivity}. 
For these two datasets, we restrict the search to 
$k_\alpha \in \{2,4,8,16\}$ and 
$\tau_\alpha \in \{0.75,0.90,0.95\}$, since larger scales are substantially more expensive on protein point clouds. 
The results show that the default setting $k_\alpha=8,\tau_\alpha=0.90$ remains in a usable region for both datasets. 
On 7WJI, it gives Wasserstein distance $0.00548$ and signal fraction $0.9691$; on 7SX3-CA, it gives signal fraction $0.9673$. 
Several smaller scale settings obtain zero Wasserstein distance and slightly higher signal fraction, which suggests that NALCN is less demanding in terms of scale for the persistence features used in evaluation. 
At the same time, increasing $k_\alpha$ to $16$ consistently lowers the signal fraction, indicating that overly large scales can mix signal atoms with outliers and reduce selection purity. 

\textbf{These results support the intended role of the scale heuristic, which provides a stable operating region without requiring per-dataset tuning.} 

\subsection{Sensitivity to Simplex Seed Selection}
\label{app:simplex-seed-sensitivity}

Figure~\ref{fig:p0a-simplex-quantile} studies the sensitivity of TRLS to the simplex seed quantile $\tau_\sigma$. This parameter is used after the Hodge energy has been computed: TRLS selects high energy simplices as seeds, lifts their support to a point score, and then selects landmarks from the resulting score. \emph{This hyperparameter $\tau_\alpha$ also controls the alpha complex construction.}

For each $\tau_\sigma$, we report the Wasserstein distance to the clean signal PD reference, thesignal fraction, and the energy leakage ratio. Let $v_i(\tau_\sigma)$ be the final normalized Hodge-derived point score. 

The energy leakage ratio is
\[
    \frac{v_{\mathrm{irr}}(\tau_\sigma)}
         {v_{\mathrm{rel}}(\tau_\sigma)}
    \quad\text{or}\quad
    \frac{v_{\mathrm{out}}(\tau_\sigma)}
         {v_{\mathrm{rel}}(\tau_\sigma)},
\]
where $\mathrm{rel}$ denotes the clean topological signal, $\mathrm{irr}$ denotes non topological nuisance components, and $\mathrm{out}$ denotes injected outliers. Thus, lower leakage means that the final TRLS Hodge-derived  score assigns less mass to nuisance or outlier points relative to the relevant topological region.

As shown in Figure \ref{fig:p0a-simplex-quantile}, the default value $\tau_\sigma=0.90$ lies in a stable region across the seven datasets. Very small $\tau_\sigma$ selects too many seed simplices and can spread the lifted score into nuisance regions. Very large $\tau_\sigma$ can make the seed set too narrow. In \textsc{sphere-in-circle} and \textsc{torus-in-circle}, this keeps the signal fraction high but worsens the persistence distance. The protein and Lorenz attractor datasets have several zero Wasserstein values, so these panels should be interpreted together with signal fraction and leakage. Overall, the result supports using a high but not extreme simplex seed quantile as the default.

\clearpage
\begin{figure}[h]
    \centering
    \includegraphics[width=.9\linewidth]{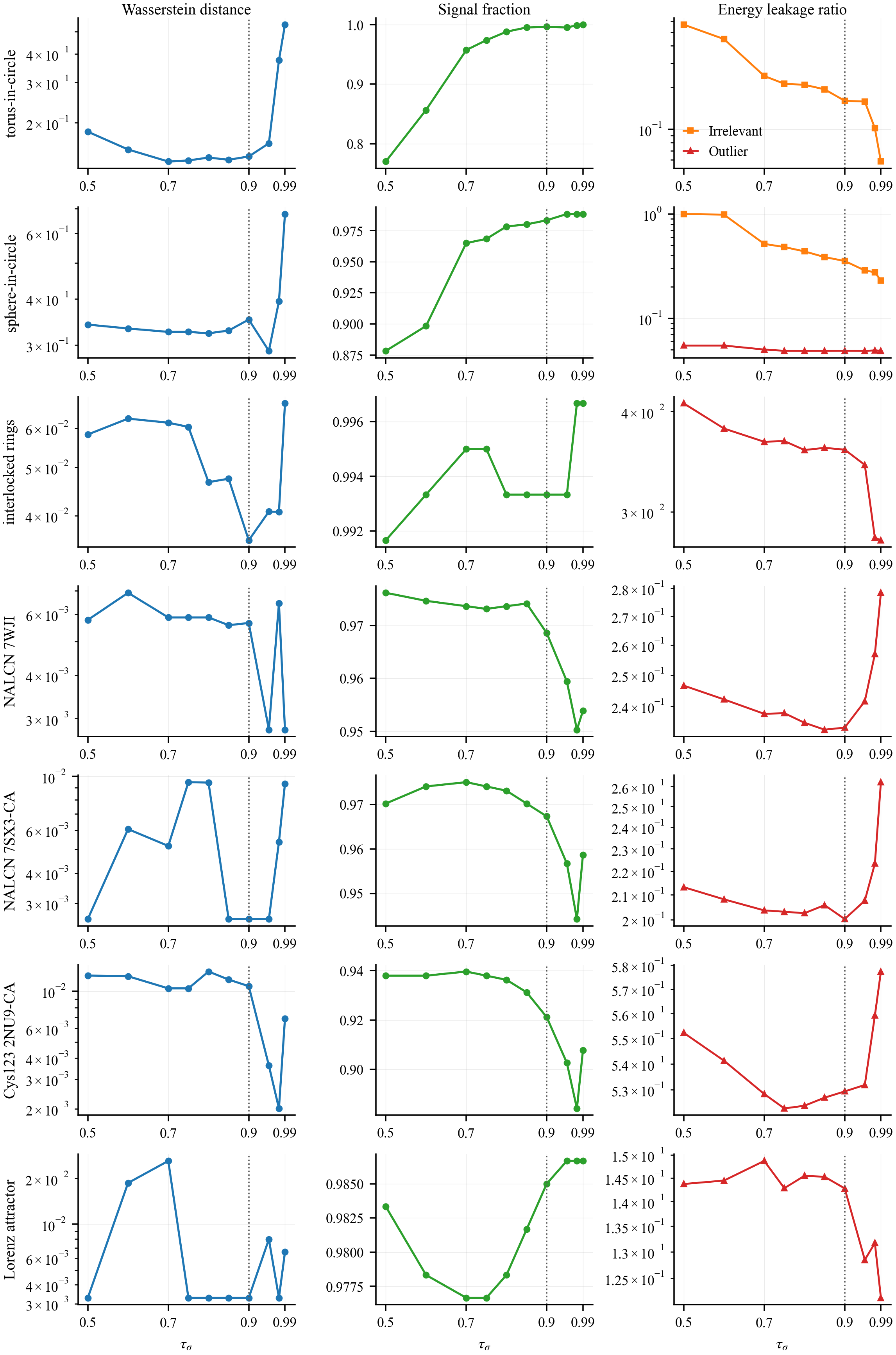}
    \caption{
    Sensitivity of TRLS to the simplex seed quantile $\tau_\sigma$ on the seven datasets. 
    The vertical dotted line marks the default value $\tau_\sigma=0.90$.
    The first column reports Wasserstein distance to the clean signal PD reference, the second column reports signal fraction, and the third column reports the energy leakage ratio from topologically irrelevant or outlier points to the topologically relevant signal.
    }
    \label{fig:p0a-simplex-quantile}
\end{figure}
\clearpage



\subsection{Choice of the Near Zero Eigenvalue Threshold}
\label{appendix: 5}

TRLS uses the near zero eigenspace of the Hodge Laplacian to estimate the harmonic subspace. In practice, we set the numerical threshold to $\eta=10^{-8}$. This choice is supported by the eigenspectrum in Figure~\ref{fig:nalcn_eigenspectrum}. 

For both NALCN 7WJI and NALCN 7SX3-CA, and for both $q=1$ and $2$, the spectra show a clear separation between numerical zero eigenvalues around machine precision and the first positive eigenvalues, which are several orders of magnitude larger. Therefore, $10^{-8}$ lies safely inside the spectral gap and separates the harmonic subspace from the non-harmonic components. This indicates that the threshold is not a sensitive tuned parameter in these experiments, but a numerical cutoff for distinguishing near zero eigenvalues from stable positive modes.

\begin{figure}[h]
    \centering
\includegraphics[width=\linewidth]{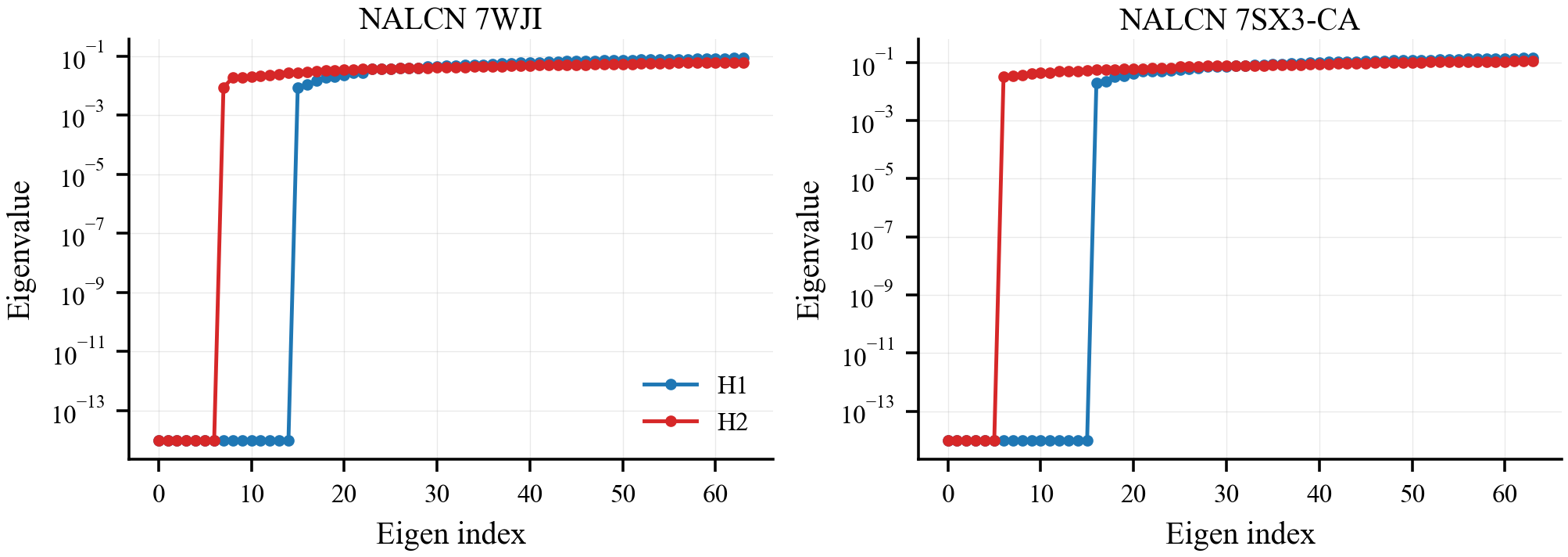}
    \caption{
    Small eigenvalue spectra of the default TRLS Hodge Laplacians on the two NALCN point clouds. 
    The near zero eigenvalues are separated from the positive spectrum by a clear gap, supporting the use of $10^{-8}$ as the numerical threshold.
    }
    \label{fig:nalcn_eigenspectrum}
\end{figure}

\subsection{Ablation on RKDE}
\label{appendix: ablation on RKDE}
\begin{figure}[h]
    \centering
    \includegraphics[width=\linewidth]{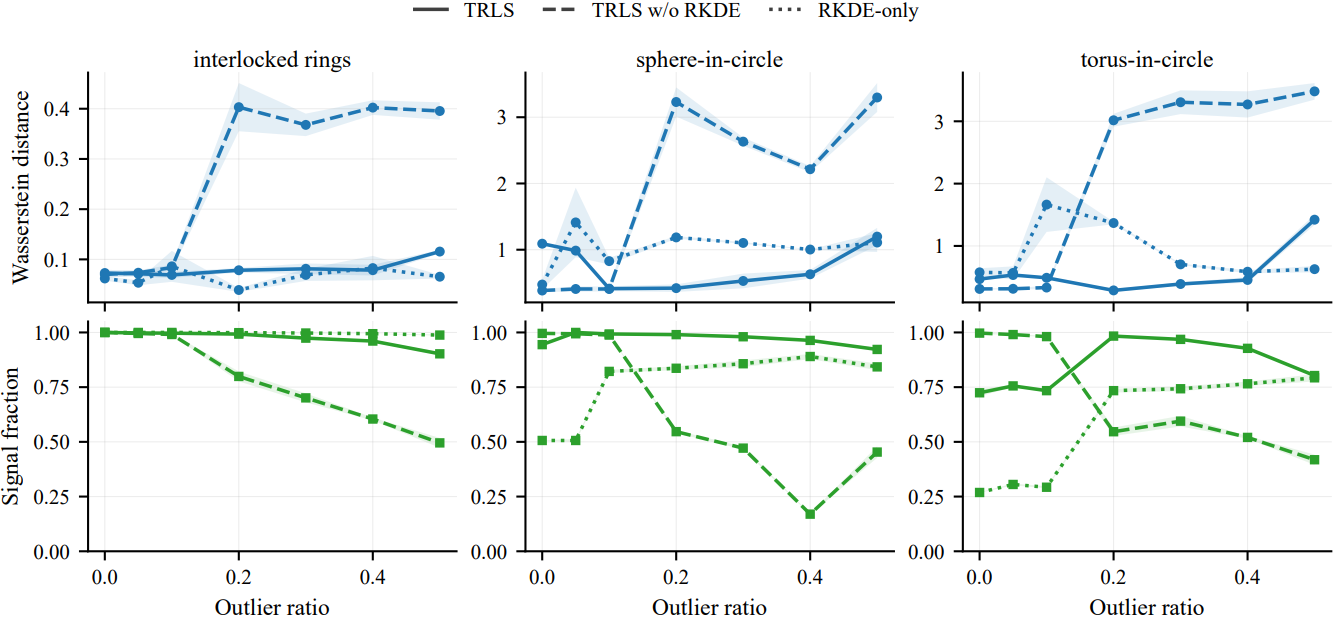}
    \caption{
    RKDE ablation on three synthetic datasets. 
    }
    \label{fig:rkde_ablation_tradeoff}
\end{figure}

Across the three synthetic datasets in Figure \ref{fig:rkde_ablation_tradeoff}, TRLS with RKDE achieves a more favorable balance between topological accuracy and signal preservation. As the outlier ratio increases, the variant without RKDE shows a clear rise in Wasserstein distance and a drop in signal fraction, especially on \textsc{sphere-in-circle} and \textsc{torus-in-circle}. In contrast, TRLS remains much more stable, indicating that RKDE reliability helps suppress outlier induced structures before Hodge based scoring. This confirms that RKDE is not only a density correction, but also a key component for robust landmark selection under increasing contamination.

On \textsc{interlocked rings}, which does not have topologically irrelevant region, RKDE-only produces similar results to TRLS. On \textsc{sphere-in-circle} and \textsc{torus-in-circle}, using RKDE alone generally leads to higher Wasserstein distance between PDs because it ignores the topology-related part (Hodge Laplacian) and only considers density. This demonstrates the necessity of Hodge Laplacian in TRLS. 

\subsection{Consistency Between Scoring and Evaluation Filtrations}
\label{appendix: 7}

We further examine whether the advantage of TRLS depends on using the same filtration for both scoring and evaluation in Figure \ref{fig:alpha-rips-consistency}. On the \textsc{torus-in-circle} dataset, TRLS computes its scores from the RKDE-guided alpha complex, while the selected landmarks are evaluated using either alpha persistence or Rips persistence. Both alpha and Rips evaluation gives the same best method, i.e., TRLS. This indicates that the main conclusion is not merely an artifact of matching the scoring and evaluation complexes.

\begin{figure}[h]
    \centering
\includegraphics[width=\linewidth]{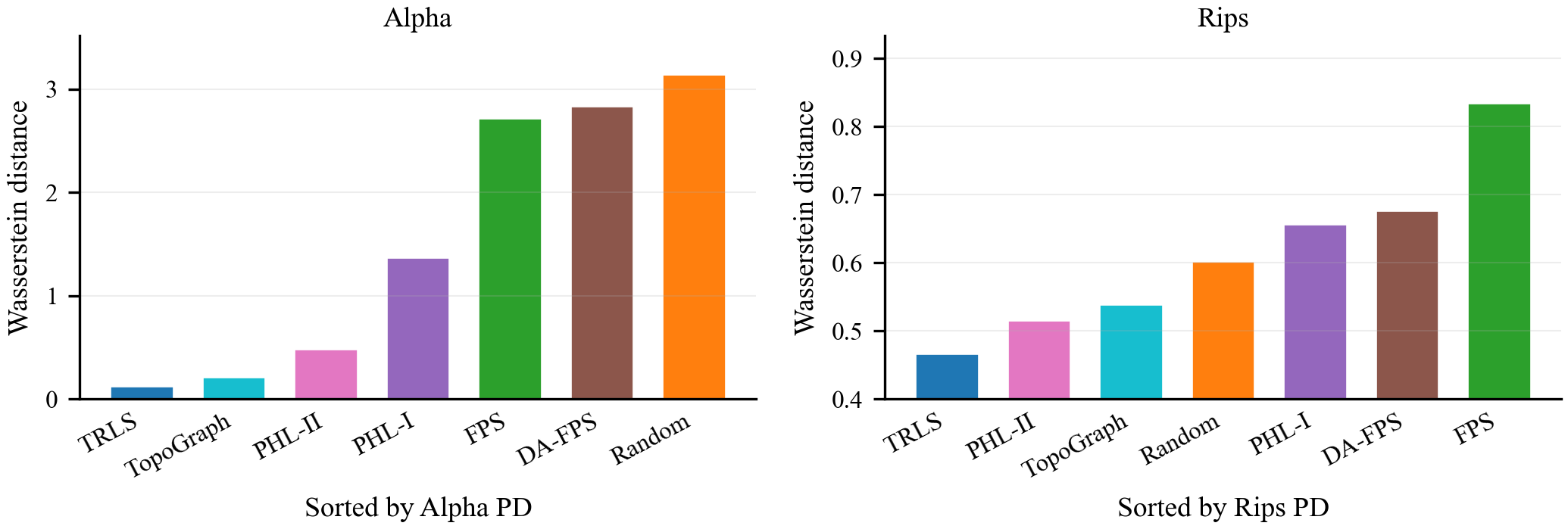}
    \caption{Consistency between alpha and Rips evaluation on the \textsc{torus-in-circle} dataset. The left panel reports Wasserstein distances sorted by alpha persistence evaluation, while the right panel reports Wasserstein distances sorted by Rips persistence evaluation.}
    \label{fig:alpha-rips-consistency}
\end{figure}


This result should not be interpreted as full filtration invariance. Rather, it provides an empirical sanity check that, for the strongest methods on this benchmark, topological scoring remains effective even when the final evaluation is performed with a different filtration.





\subsection{Stability Analysis}
\label{appendix: stability}

\begin{figure}[h]
    \centering
\includegraphics[width=.99\linewidth]{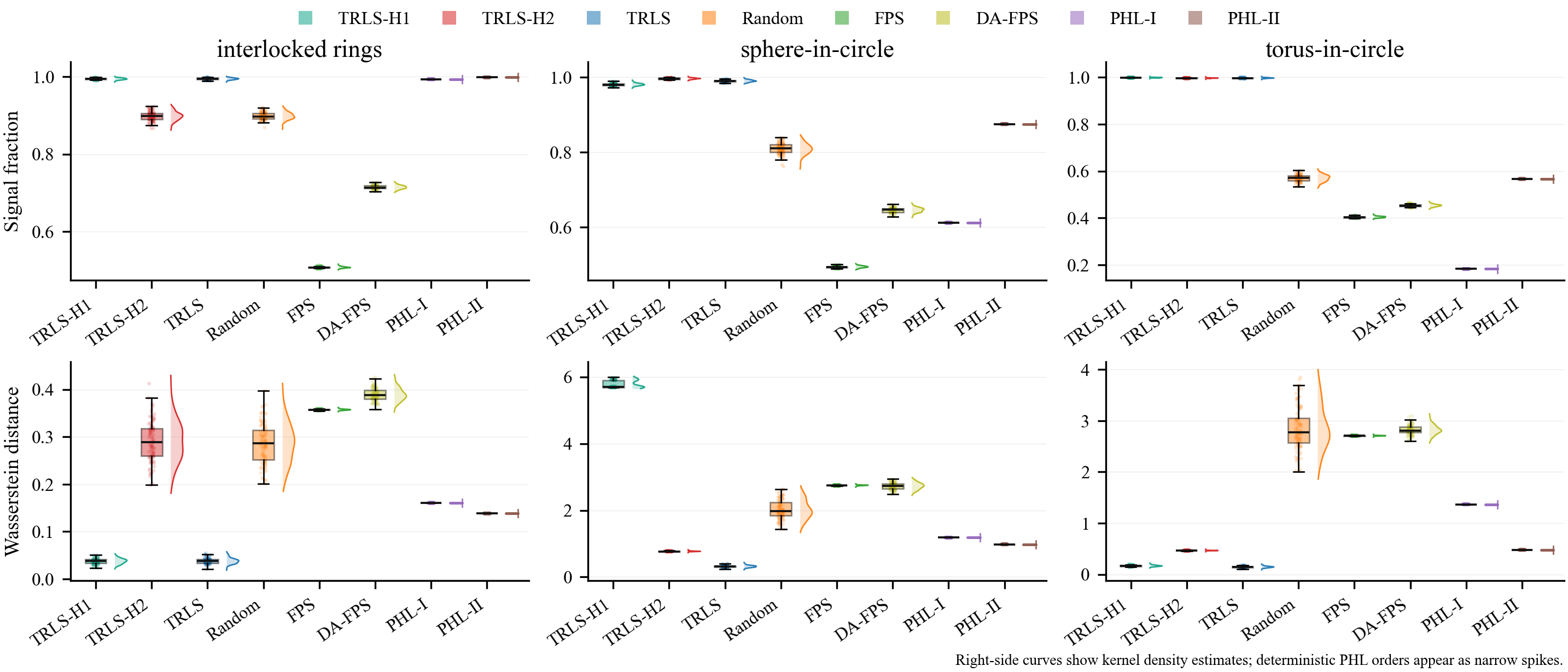}
    \caption{Sampling stability on synthetic data sets at $\rho=0.20$. TRLS keeps low Wasserstein distance and high signal fraction across repeated runs. The right side curve beside each box shows the empirical density estimate.}
    \label{fig:p0c_rho020_synthetic}
\end{figure}

\begin{figure}[h]
    \centering
\includegraphics[width=.99\linewidth]{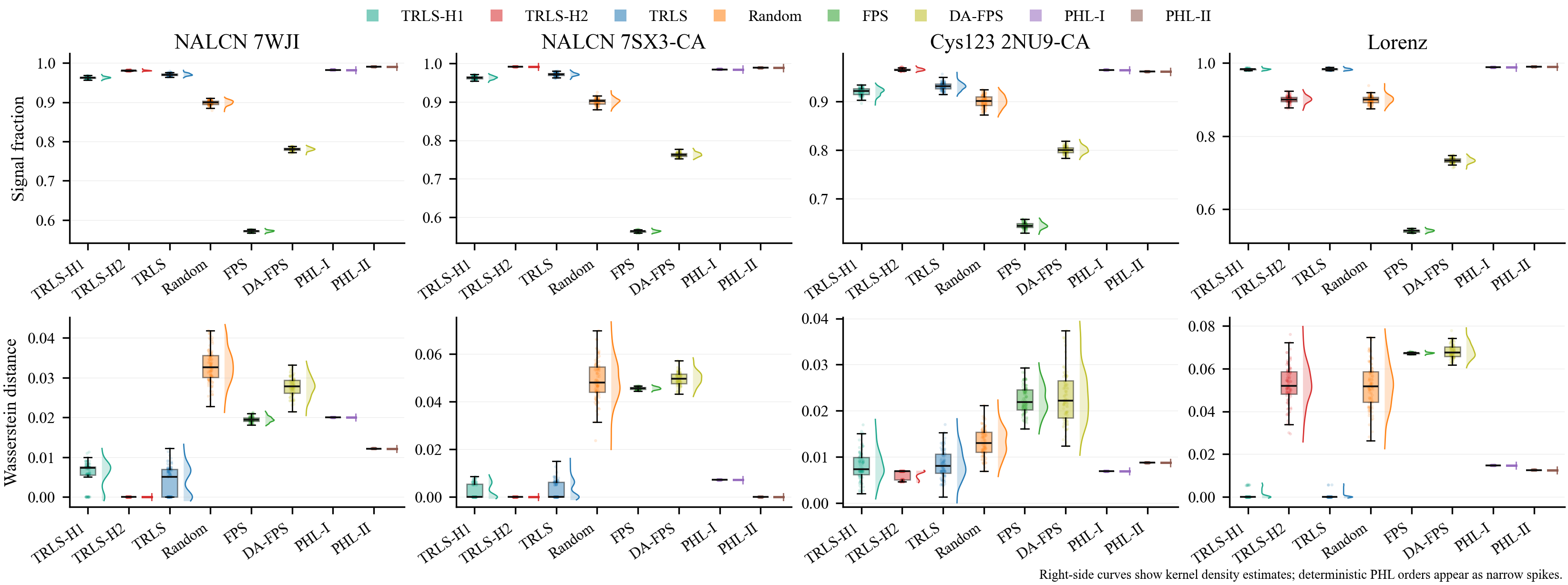}
    \caption{Sampling stability on real data sets at $\rho=0.20$. The results show that TRLS remains stable under repeated random tie breaking. The right side curve beside each box shows the empirical density estimate.}
    \label{fig:p0c_rho020_real}
\end{figure}



To further test whether the stability of TRLS depends on a single sampling density, we repeat the same stability experiment at $\rho=0.2$. For each stochastic method, we run 100 repetitions and report the distribution of signal fraction and Wasserstein distance in Figure \ref{fig:p0c_rho020_synthetic} and \ref{fig:p0c_rho020_real}. 

The results are consistent with the results in the main body of the paper. On the synthetic data sets, fused TRLS gives the most stable behavior on mixed topology data, while TRLS-H1 is only strong on data dominated by circles. On \textsc{sphere-in-circle}, TRLS-H1 fails because it ignores the sphere, which supports the need for dimension fusion. On the real-world datasets, TRLS variants keep low Wasserstein distance and high signal fraction. TRLS-H2 performs best on Cys123, matching the void oriented structure of this data set.

Overall, the results show that TRLS and its variants produce competitive and stable results.






\subsection{Scale-up test on the grid cell dataset.}
\label{appendix: scale-up}
We further evaluate the computational behavior on the grid cell dataset with fixed eigen budget $R=8$. For each method, we report the time for scoring and landmark selection at sampling density $\rho=0.1$. TRLS-H1 has similar time cost as that of TopoGraph and PHL methods. TRLS-H1, PHL, TopoGraph and DA-FPS are much slower than random sampling and FPS. However, the huge time cost of DA-FPS does not pay off. As shown in the main body of our paper, DA-FPS still selects outliers and topologically irrelevant points.

\begin{figure}[h]
    \centering
\includegraphics[width=0.6\linewidth]{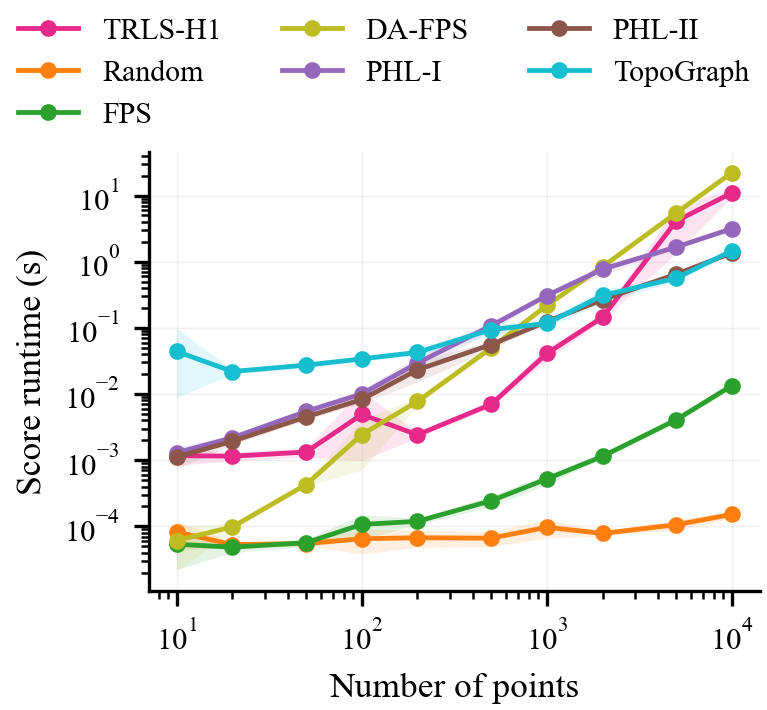}
    \caption{Scale-up runtime comparison on the grid cell dataset at sampling density $\rho=0.1$.}
    \label{fig:grid-cell-scale-up}
\end{figure}

\subsection{Scale validation by signal barcodes.}
\label{appendix: scale validation}
We verify whether the scale used by TRLS to build the alpha complex lies in the barcode intervals of the topologically relevant signal points. For each synthetic dataset, we compute the alpha barcode on the clean signal part and compare the TRLS scale with the birth and death times of the target features. Since GUDHI reports alpha filtration values as squared radii, we convert all intervals to radius scale before comparison.

\begin{figure}[t]
    \centering
    \includegraphics[width=0.75\textwidth]{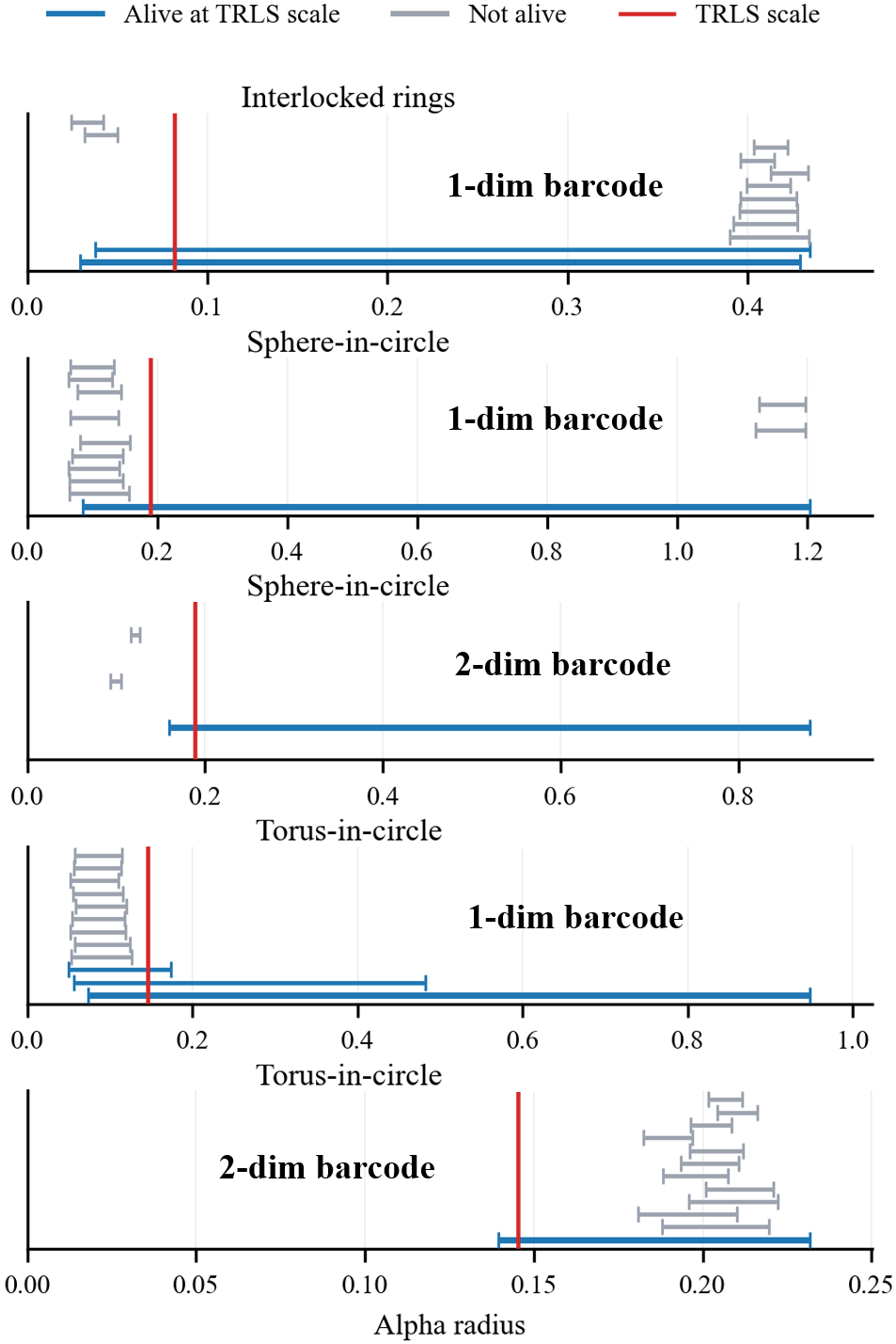}
    \caption{Validation of the TRLS scale on synthetic datasets.}
    \label{fig:scale-barcode-validation}
\end{figure}

As shown in Fig.~\ref{fig:scale-barcode-validation}, the TRLS scale falls inside the barcode interval of the target signal feature in all tested cases. This suggests that the selected scale is not an arbitrary local distance threshold, but is aligned with the persistence range where the relevant topology is alive.

\section{Experimental settings}
\label{appendix: exp setting}




All methods used in the experiments were run on a workstation with i7-12700K CPU, 128 GB RAM, NVIDIA GeForce RTS 3060 GPU and Windows 10 system.

\paragraph{Synthetic datasets.}
We use three synthetic point clouds to test whether landmark selection preserves the topologically relevant part of the data under outliers and irrelevant points. The \textsc{interlocked rings} dataset contains two linked noisy circles in $\mathbb{R}^3$ with uniformly sampled outliers. The two rings are signal and all outliers are non signal. The \textsc{sphere-in-circle} dataset contains a sphere and a larger circle sharing the same center, connected by a short line segment, with outliers. The sphere and circle are signal, while the connector is non signal. The \textit{torus-in-circle} dataset contains a torus inside a larger circle, connected by a line segment, with three solid spheres and outliers. The torus and circle are signal, while the connector, solid spheres are non signal. We evaluate 1-dimensional PD for \textsc{interlocked rings}, and both 1 and 2 dimensional PDs for \textsc{sphere-in-circle} and \textsc{torus-in-circle}. The reference PD is always computed on the clean topologically relevant points only.

\paragraph{Protein and dynamical system datasets.}
We use three protein structure point clouds and one dynamical system dataset. For NALCN, we report two official aligned variants: 7WJI and 7SX3-CA. The 7WJI setting follows the TPCC atom filtering, while 7SX3 CA follows the TOPF Figure 3 CA atom preprocessing. For Cys123, we use the 2NU9 CA protein structure aligned with TOPF Figure 4. For all protein datasets, we add $10\%$ uniformly sampled cube outliers around the clean protein coordinates. Clean protein atoms are signal and outliers are non signal. For the Lorenz attractor, we use points sampled from the attractor state space and add $10\%$ cube outliers. The clean attractor points are signal. NALCN and Lorenz are evaluated mainly by Wasserstein distance between 1-dimensional PDs, while Cys123 is evaluated mainly by Wasserstein distance between 2-dimensional PDs.

\paragraph{Grid cell dataset.}
For the large distance-based experiment, we use the grid cell torus dataset. The input is a precomputed correlation distance matrix rather than Euclidean coordinates, so alpha complexes are not applicable. We therefore use Rips complexes for both TRLS scoring and  evaluation. The main grid cell TRLS variant is TRLS-H1, using sparse Rips scoring with sparse parameter $\epsilon=0.99$, scale $0.5$, and eigen budget $R=4$.

\paragraph{TRLS parameters.}
Unless otherwise specified, TRLS uses an RKDE reliability score and an alpha complex built at a heuristic scale. The default scale is selected from local neighbor distances with neighbor parameter $k=8$, weighted quantile $0.90$.  For each target dimension $q$, TRLS constructs the Hodge Laplacian $L_q$ on the selected complex and computes its near zero eigenspace using a sparse Lanczos eigensolver. The default eigen budget is $R=20$, and eigenvectors with eigenvalue below $10^{-8}$ are used to form simplex energy scores. Simplex scores are lifted to point scores, and scores from multiple dimensions are fused by max pooling. The default TRLS uses dimensions $q\in\{1,2\}$ when both loop and void structures are relevant; TRLS-H1 and TRLS-H2 use only $q=1$ and $q=2$, respectively. 

\paragraph{TopoGraph configuration.}
For an input subset of size $m$, we set the DTM and adaptive kNN parameters to the same value:
\begin{equation}
k_{\mathrm{DTM}}=k_{\mathrm{AkNN}}=
\begin{cases}
\min(20,m-1), & m\leq 2000,\\
\lfloor 0.005m\rfloor, & m>2000.
\end{cases}
\end{equation}

The result hyperparameters are set to be the same in \url{https://github.com/tulip-lab/topograph}.

The target number of super nodes is $N_c=\rho m$, where $\rho=0.1$. Each activity vector is centered and normalized to unit norm before applying TopoGraph, so that Euclidean distance represents the correlation distance used by the other methods. Following the official implementation, the first original point in each super node is selected as its landmark.

The proportional rule $k=0.005m$ produces excessively sparse and disconnected graphs for small $m$, preventing graph coarsening from reaching $N_c$. We therefore use $k=20$ for these cases, capped by $m-1$ when necessary. The original proportional setting is retained for larger inputs, where it is stable. If no reliable knee is detected during DTM filtering, all points are retained, since applying an undefined threshold would make the cleaning step invalid.

\paragraph{Other baseline settings.}
We compare with Random, FPS, DA-FPS, PHL-I, and PHL-II. Random samples landmarks uniformly without replacement. FPS uses farthest point sampling with a random initial point. DA-FPS follows density aware farthest point sampling and uses the same sampling density as the other methods. PHL-I and PHL-II follow the persistent homology landmark baseline of Stolz et al. For each point, they compute local persistent homology in a neighborhood $\delta=0.25$ (deafult setting in \citep{stolz2023outlier}) around that point and use the resulting local persistence score for ranking. PHL-I uses all dimensional local persistence with maximum dimension $2$, while PHL-II uses dimension restricted local persistence with maximum dimension $1$. In our implementation, PHL local computations use $8$ parallel workers in runtime experiments. Random, FPS, DA-FPS, and TRLS are repeated when randomness is involved; deterministic PHL scores are computed once.

\paragraph{Setting in robustness experiment}
\label{app:robustness setting}

We evaluate robustness on the \textsc{Torus-in-circle} dataset containing 1,200 torus points, 1,200 ring points, 800 connector points, and 600 points from three solid spheres. The torus and ring form the topologically relevant signal, while the connector, solid spheres, and added outliers are irrelevant. The sampling density is fixed to $\rho=0.2$.

For the noise sweep, Gaussian noise with standard deviation
$\sigma\in\{0,0.005,\ldots,0.08\}$ is added to the base point cloud, while the outlier ratio is fixed to $0.1$. For the outlier sweep, the Gaussian noise is fixed to zero and uniform cube outliers are added at ratios
$p\in\{0,0.025,\ldots,0.30\}$. The main figure reports the range $p\in[0.10,0.25]$. For a target ratio $p$, we append
\begin{equation}
    n_{\mathrm{out}}=
    \operatorname{round}\left(\frac{pN}{1-p}\right)
\end{equation}
outliers to the $N=3,800$ base points.

We compare TRLS with Random, FPS, DA-FPS, PHL-I, PHL-II, and TopoGraph. PHL uses the same neighborhood radius $\delta=0.25$ as in the main experiment. For stochastic methods, landmark selection is repeated five times and we report the mean and standard deviation. Each perturbation level uses one fixed point cloud realization.

Signal fraction is the proportion of selected landmarks belonging to the torus or ring. Topological preservation is measured by the sum of the second-order Wasserstein distances between the selected and clean-signal persistence diagrams in dimensions one and two. All diagrams are computed using the alpha filtration, retaining finite features with persistence at least $0.01$. The reference diagram is always computed from the unperturbed torus and ring.


\section{Proof of Proposition 1}
\label{appendix: proof}

\textbf{Proposition 1.} Under Definition 2, 
the topologically relevant and irrelevant points form a partition of \(X_{\mathcal M}\), i.e., $X_{\mathcal M}=X_{\mathrm{r}}\cup X_{\mathrm{ir}}, X_{\mathrm{r}}\cap X_{\mathrm{ir}}=\varnothing.$

\begin{proof}
By definition, for every \(q\in Q\),
\[
X_{\mathrm{r}}^q = X_{\mathcal M}\setminus X_{\mathrm{ir}}^q.
\]
Therefore, we have
\[
X_{\mathrm{r}}
=
\bigcup_{q\in Q}X_{\mathrm{r}}^q
=
\bigcup_{q\in Q}\left(X_{\mathcal M}\setminus X_{\mathrm{ir}}^q\right).
\]
We now show that
\[
\bigcup_{q\in Q}\left(X_{\mathcal M}\setminus X_{\mathrm{ir}}^q\right)
=
X_{\mathcal M}\setminus \bigcap_{q\in Q}X_{\mathrm{ir}}^q.
\]
For any point \(x\), we have
\[
x\in \bigcup_{q\in Q}\left(X_{\mathcal M}\setminus X_{\mathrm{ir}}^q\right)
\]
if and only if there exists some \(q\in Q\) such that
\[
x\in X_{\mathcal M}
\quad\text{and}\quad
x\notin X_{\mathrm{ir}}^q.
\]
Since \(X_{\mathcal M}\) does not depend on \(q\), this is equivalent to
\[
x\in X_{\mathcal M}
\quad\text{and}\quad
\exists q\in Q,\ x\notin X_{\mathrm{ir}}^q.
\]
The second condition is equivalent to
\[
x\notin \bigcap_{q\in Q}X_{\mathrm{ir}}^q.
\]
Thus, we have
\[
x\in X_{\mathcal M}
\quad\text{and}\quad
x\notin \bigcap_{q\in Q}X_{\mathrm{ir}}^q,
\]
which means
\[
x\in X_{\mathcal M}\setminus \bigcap_{q\in Q}X_{\mathrm{ir}}^q.
\]
Hence, it holds that
\[
X_{\mathrm{r}}
=
X_{\mathcal M}\setminus \bigcap_{q\in Q}X_{\mathrm{ir}}^q.
\]
Using the definition
\[
X_{\mathrm{ir}}=\bigcap_{q\in Q}X_{\mathrm{ir}}^q,
\]
we can obtain
\[
X_{\mathrm{r}}=X_{\mathcal M}\setminus X_{\mathrm{ir}}.
\]
Finally, we have
\[
X_{\mathrm{r}}\cup X_{\mathrm{ir}}
=
\left(X_{\mathcal M}\setminus X_{\mathrm{ir}}\right)\cup X_{\mathrm{ir}}
=
X_{\mathcal M}.
\]
and 
\[
X_{\mathrm{r}}\cap X_{\mathrm{ir}}
=
\left(X_{\mathcal M}\setminus X_{\mathrm{ir}}\right)\cap X_{\mathrm{ir}}
=
\varnothing.
\]
Thus, \(X_{\mathrm{r}}\) and \(X_{\mathrm{ir}}\) form a partition of
\(X_{\mathcal M}\).
\end{proof}

\section{On the use of Large Language Models}
\label{appendix: LLM}

 Large Language Models (LLMs) were utilized in the polishing phase of this paper’s preparation. Specifically, LLMs were employed to optimize linguistic clarity, enhance stylistic coherence, and correct minor grammatical or syntactical inconsistencies. All core intellectual content, including conceptual frameworks, empirical observations, argumentative structure, and citation alignment, was developed, curated, and validated exclusively by the human authors.

 Portions of the code in this project were developed with the assistance of large language models (LLMs), including support for code scaffolding, debugging, and documentation refinement. All LLM-generated or LLM-modified content was manually reviewed, tested, and revised as necessary by the authors. The authors take full responsibility for the correctness and reliability of the final code and the resulting research outcomes.




\end{appendices}


\bibliography{sn-bibliography}

\end{document}